\documentclass[notitlepage,longbibliography,superscriptaddress,pre,nofootinbib]{revtex4-2}
\usepackage{placeins}
\usepackage{booktabs}
\usepackage{pbox}
\usepackage{color}
\usepackage{relsize}
\usepackage{graphicx}
\usepackage[intlimits]{amsmath}
\usepackage{amsxtra,amssymb,fancyhdr, amsthm,latexsym}
\usepackage{cases}
\usepackage{verbatim}
\usepackage{float}
\usepackage[colorlinks=true,linkcolor=blue,citecolor=red]{hyperref}
\usepackage{textcomp}
\usepackage{amsfonts}
\usepackage{amssymb}
\usepackage[mathlines]{lineno}
\usepackage{algpseudocode}

\providecommand{\U}[1]{\protect\rule{.1in}{.1in}}
\newcounter{algcnt}

\newtheorem{theorem}{Theorem}[section]
\newtheorem{lemma}[theorem]{Lemma}

\newtheorem{prop}[theorem]{Proposition}

\renewcommand\thesection{\arabic{section}}
\renewcommand\thesubsection{\thesection.\arabic{subsection}}

\makeatletter
\def\p@subsection{}
\def\p@subsubsection{}
\makeatother

\DeclareMathOperator*{\argmin}{arg\,min}

\providecommand{\func}[1]{\mathop{\mathrm{#1}}\nolimits}

\begin{document}

\begin{empty}

\title{Minimizing capture time with many small traps in
heterogeneous media}

\author{Denis S. Grebenkov}
\affiliation{Laboratoire de Physique de la Mati\`{e}re Condens\'{e}e, CNRS -- Ecole Polytechnique,
Institut Polytechnique de Paris, 91120 Palaiseau, France.}
\email{denis.grebenkov@polytechnique.edu}

\author{Theodore Kolokolnikov}
\affiliation{Dalhousie University Halifax, Nova Scotia, Canada}
\email{tkolokol@gmail.com}

\begin{abstract}
We study the problem of optimally placing a large number of small
absorbing traps to minimize the mean first-passage time (MFPT) of
particles diffusing in a heterogeneous medium with space-dependent
diffusivity $D(x)$ and prescribed initial particle distribution
$\omega(x)$. In two and three dimensions we identify two distinct
regimes, depending on how strongly the trap ensemble depletes the
particles. In the weak trapping regime (fewer traps) the optimal trap
density $\rho$ is the arithmetic average $\rho =
\frac12(\mu+\omega)$ of the normalized inverse diffusivity $\mu
\propto 1/D$ and the initial particle distribution $\omega$. In the strong
trapping regime (more traps), it is proportional to their geometric
average, $\rho \propto \sqrt{\mu \omega}$. We revisit the classical
Green's function approach and show that it applies to weak trapping
only; in 2D this requires the trap size to be exponentially small in
the number of traps, so that most applications of interest fall in the
strong trapping regime instead. We therefore develop a homogenization
approach that captures both regimes.  A special choice of initial
particle distribution is $\omega \propto 1/D$, which corresponds to
the equilibrium particle distribution in the absence of traps
according to the It\^{o} interpretation. In this case we show that the
optimal trap distribution is the stationary particle distribution
itself, regardless of the trapping strength, and this makes the MFPT
constant throughout the domain; the same trap distribution also
minimizes the worst-case capture time.  Finally, in one dimension we
find instead $\rho\propto\left(\mu \omega\right) ^{1/3}$, and a
similar analysis covers space-dependent drift and thin domains of
variable cross-section.  Direct numerical optimization confirms the
analytical results.
\end{abstract}

\maketitle
\end{empty}

\section{Introduction}

The mean first-passage time (MFPT) of a diffusing particle to a set of small
absorbing traps is a classical problem in diffusion theory (also called the
narrow escape problem \cite{holcman2014narrow}), with applications ranging
from cellular signaling and receptor-ligand binding \cite%
{berg1977physics,lauffenburger1993receptors,Holcman2013,Bressloff2013} to
search theory and ecology \cite%
{viswanathan1999optimizing,viswanathan2008levy,benichou2014firstpassage,GrebenkovBook}%
. A recurring question is: given a fixed number $N$ of traps, how should
they be arranged in space to minimize the average time it takes a diffusing
particle to be captured? For homogeneous media, this question has been
studied extensively in various contexts \cite%
{kuijlaars1998asymptotics,kolokolnikov2005optimizing,cheviakov2013narrow,lindsay2017first,GrebWard2026,GrebWard2026b}%
.

However, most of the natural environments (e.g., the cytoplasm of a living
cell) are not homogeneous. Several recent works have looked at the MFPT
problem with inhomogeneities \cite%
{Hwang2012,vaccario2015first,Godec2015,Godec2016,Grebenkov2016,bressloff2017temporal,Lanoiselee2018,vuijk2018pseudochemotaxis,Grebenkov2018,Mangeat2019,Mangeat2021,caprini2022dynamics,Chun2023,tunglawley2026stochastic,tunglawley2025escape}%
. Most of these works consider a single or several traps whose locations are
fixed. However, the fundamental problem of finding the optimal spatial
configuration of traps minimizing the MFPT remains open. When the medium is
homogeneous, the optimal configuration of traps is expected to be roughly
uniform \cite{cheviakov2013narrow,varizy2026pattern}. In turn, in the
presence of heterogeneous diffusion or drift (e.g., advection, external
force), the optimal trap location is intuitively expected to have a higher
density in areas where particles ``linger'', i.e., in areas of lower
diffusion or slower advection. The goal of this paper is to develop precise
asymptotic descriptions of optimal configurations for several model
situations in the limit of many traps.

Consider a bounded domain $\Omega\subset{\mathbb{R}}^{d}$ with a smooth
boundary $\partial\Omega$. For a smoothly varying space-dependent diffusion
coefficient $D(x)>0$, the Fokker--Planck equation in the It\^{o}
interpretation reads 
\begin{equation}  \label{eq:Fokker}
p_{t}=\Delta\left( D(x)p\right) ,
\end{equation}
where $\Delta$ is the Laplace operator \cite{Gardiner,Risken,Redner}. In
turn, the associated MFPT is governed by the adjoint operator (backward
form). If the diffusing particle starts from a point $x\in\Omega$, the MFPT $%
u(x)$ to any of $N$ small non-overlapping spherical traps, all of the same
radius $\varepsilon,$ located at points $x_{j}\in\Omega$ ($j=1\ldots N$),
satisfies 
\begin{equation}
\left\{ 
\begin{array}{rll}
\Delta u & =-M(x) & x\in\Omega\backslash\left( \cup_{j=1}^{N}B_{\varepsilon
}(x_{j})\right) , \\ 
u & =0 & x\in\left( \cup_{j=1}^{N}\partial B_{\varepsilon}(x_{j})\right) ,
\\ 
\partial_{n}u & =0, & x\in\partial\Omega,%
\end{array}
\right.  \label{u2d}
\end{equation}
where $\partial_{n}$ is the normal derivative, $M(x)=1/D(x)$ (inverse
diffusivity or mobility), and $B_{\varepsilon}(x_{j})$ is a ball of radius $%
\varepsilon$ centered at $x_{j}:B_{\varepsilon}(x_{j})=\left\{ x\in \mathbb{R%
}^d:~~\left\vert x-x_{j}\right\vert <\varepsilon\right\} $. We also define 
\begin{align}
\left\vert \Omega\right\vert & =\int_{\Omega}dx,\qquad \left\vert
M\right\vert =\int_{\Omega}M(x)dx = \int_{\Omega} \frac{dx}{D(x)} \,, \qquad
\mu(x) =\frac{M(x)}{\left\vert M\right\vert }=\frac{D^{-1}(x)}{\left\vert
D^{-1}\right\vert }.  \label{alpha}
\end{align}
Note that $\mu(x)$ is precisely the equilibrium distribution of a particle
when there is no trap. Indeed, the probability flux density, $\mathbf{J}_{%
\mathrm{eq}} = - \nabla(Dp_{\mathrm{eq}})$, is zero at equilibrium, implying 
$Dp_{\mathrm{eq}} =\func{const}$. Normalizing so that $\int\nolimits_{%
\Omega} p_{\mathrm{eq}} =1$ gives exactly $p_{\mathrm{eq}}(x) =\mu(x)$.

We focus on the average capture time. Let $\omega(x)$ denote a given
distribution of the particle's starting location, normalized throughout to
be a probability density, 
\begin{equation}
\int_{\Omega}\omega(x)\,dx=1,\qquad\omega\geq0,  \label{omeganorm}
\end{equation}
so that the quantity to be minimized is the $\omega$-averaged MFPT 
\begin{equation}
\overline{u}_{\omega}=\int_{\Omega}\omega(x)u(x)\,dx.  \label{ubarw}
\end{equation}
The uniform choice $\omega = \omega_0 \equiv1/\left\vert \Omega\right\vert $
gives the \textquotedblleft classical\textquotedblright\ volume-averaged
MFPT $\overline{u}=\int_{\Omega}u\,dx/\left\vert \Omega\right\vert $,
averaged over all starting locations distributed uniformly. Our main
question is the following:

\textbf{Question:}\emph{\ In the limit of large }$N$\emph{\ and small }$%
\varepsilon,$\emph{\ for a given diffusivity} $D(x)$ \emph{and a given
starting distribution }$\omega(x)$\emph{, how should the traps be
distributed to minimize the }$\omega$\emph{-averaged MFPT\ }$\overline{u}%
_{\omega}$?

We consider this problem in one, two or three dimensions. Let us summarize
the results in 2D and 3D first. The answer strongly depends on the relative
scaling of $N$ and $\varepsilon$. Define 
\begin{equation}
\ \nu:=\left\{ 
\begin{array}{c}
4\pi\varepsilon\text{ in 3D} \\ 
\frac{2\pi}{\log(1/\varepsilon)} \text{ in 2D}%
\end{array}
\right. ,\qquad \kappa:=N\nu.  \label{nu}
\end{equation}
We refer to $\kappa=N\nu$ as the trapping strength. In particular, we
identify two opposite regimes, $\kappa\ll1$ or $\kappa\gg1,$ which we refer
to as weak or strong trapping, respectively.

Our approach starts with a well-known Green's function energy-type
formulation for the MFPT in the limit of small traps \cite%
{kolokolnikov2005optimizing,cheviakov2024optimization}. Taking a continuum
limit as $N\rightarrow\infty$ we then obtain a simple asymptotic description
of how the optimal trap density depends on the heterogeneity $\mu(x)$ and on
the starting distribution $\omega(x)$ (see \S \,\ref{sec:2D}): 
\begin{equation}
\rho(x)\sim\frac{1}{2}\left( \mu(x)+\omega(x)\right) ,\quad\kappa \ll1,\ \
N\gg1.  \label{rhoavg}
\end{equation}
Here and below, we use $f\sim g$ to mean $f/g\rightarrow1$ in the indicated
limit. For a uniformly distributed starting point, $\omega_0
\equiv1/\left\vert \Omega\right\vert $, this is the average of $\mu$ and the
uniform density. The corresponding optimal $\overline{u}_{\omega}$ is given
by (\ref{ubarsmallk}) below.

The optimal density $\rho(x)$ (in distribution sense)\ is defined as
follows. For a fixed $N,$ let $x_{1}\ldots x_{N}$ be optimal trap locations
that minimize the MFPT\ for the problem (\ref{u2d}) in the limit $%
\varepsilon\rightarrow0$, and $\rho_{N}(x)=\frac{1}{N}\sum
_{j=1}^{N}\delta\left( x-x_{j}\right)$. Note $\int_{{S}}\rho_{N}(x)\,dx$
gives the fraction of $N$ traps inside an arbitrary region ${S}%
\subset\Omega. $ We then define the limiting trap density as $%
\rho=\lim_{N\rightarrow\infty }\rho_{N},$ understood in the weak sense, so
that $\int_{{S}}\phi(x)\rho _{N}(x)\,dx\rightarrow\int_{{S}}\phi(x)\rho(x)dx$
as $N\rightarrow\infty$, for any appropriate test function $\phi(x)$ {and
subset $S\subset \Omega$.}

The downside of the Green's function approach is that it is only valid in
the weak trapping regime $\kappa\ll1$. This is particularly restrictive in
two dimensions as it requires trap sizes $\varepsilon$ to be exponentially
small in $N$, which is unrealistic for a physically or numerically
reasonable trap size (e.g. even for $N = 10$, one needs $\varepsilon \ll
e^{-2\pi N} \approx 5\cdot 10^{-28}$). Most realistic 2D situations are
therefore in the strong trapping regime $\kappa\gg1.$ In 3D, this
restriction is milder, but still relevant under many realistic scenarios
(e.g. $\varepsilon=0.01,N=100$ gives $\kappa\approx12.6$).

To overcome this restriction, we make use of a homogenization technique \cite%
{cioranescu1982strange,marchenko2006homogenization,berezhkovskii2004boundary,muratov2008boundary,cheviakov2013narrow}%
. It originates in the study of domains perforated by many small holes,
developed by Marchenko and Khruslov \cite%
{marchenko1964boundary,marchenko2006homogenization}, by Kac \cite%
{kac1974probabilistic} and by Rauch and Taylor \cite{rauch1975potential} for
the \textquotedblleft crushed ice\textquotedblright\ problem, by
Papanicolaou and Varadhan \cite{papanicolaou1980diffusion}, and by
Cioranescu and Murat \cite{cioranescu1982strange}, who coined
\textquotedblleft strange term\textquotedblright\ (terme \'{e}trange)
corresponding to the non-vanishing $O(1)$ effect of the holes in the limit $%
\kappa =O(1)$. Berezhkovskii, Muratov, Shvartsman and co-workers \cite%
{berezhkovskii2004boundary,berezhkovskii2006homogenization,muratov2008boundary}
applied the same idea to surfaces bearing many small traps, and Cheviakov
and Zawada \cite{cheviakov2013narrow} to the narrow escape problem. Here the
novelty is that the trap density is neither uniform, nor prescribed, but is
the quantity to be optimized. The homogenization limit for this problem,
derived in \S \thinspace \ref{sec:homog}, is 
\begin{equation}
\ \ \Delta u-\kappa \rho (x)u(x)=-M(x)\text{ in }\Omega ,\qquad \partial
_{n}u=0\text{ on }\partial \Omega .  \label{homog}
\end{equation}

In contrast to the Green's function approach, the homogenized formulation is
not restricted to $\kappa\ll1$ and provides an effective description across
the weak-, intermediate-, and strong-trapping regimes, subject to the
dilute-trap assumptions $\varepsilon^{d}N\ll1$, underlying the
homogenization limit. It reproduces the Green's function result (\ref{rhoavg}%
) in the weak-trapping case $\kappa\ll1,$ and we further obtain the
corresponding $\omega$-averaged MFPT: 
\begin{equation}
\overline{u}_{\omega}=\frac{\left\vert M\right\vert }{\kappa}{-}\frac{%
\left\vert M\right\vert }{4}\int_{\Omega}\left\vert \nabla
V_{\omega-\mu}\right\vert ^{2}\, +O(\kappa) ,  \label{ubarsmallk}
\end{equation}
where $V_{\omega-\mu}$ is the Poisson potential defined in (\ref{Vweq}).

Moreover, homogenization also gives an explicit answer for $\kappa\gg1.$ In
the latter case, we find that the optimal density scales like a \emph{%
geometric }average of $\omega$ and $\mu$: 
\begin{equation}
\rho(x)\sim\frac{\sqrt{\omega(x){\mu}(x)}}{\int_{\Omega}\sqrt {\omega(x){\mu}%
(x)}dx},\qquad\kappa\gg1 .  \label{rhogeom}
\end{equation}
The corresponding optimal $\overline{u}_{\omega}$ is given by%
\begin{equation}
\overline{u}_{\omega}\sim\frac{\left\vert M\right\vert }{\kappa}\left(
\int_{\Omega}\sqrt{\omega(x){\mu}(x)}dx\right) ^{2},\qquad\kappa\gg1.
\label{ubarwklarge}
\end{equation}
More generally, we derive an optimality system (\ref{kappa-any}) valid for
any $\kappa.$ We use this system to validate our results numerically
throughout.

A very natural choice is $\omega =\mu ,$ corresponding to the equilibrium
particle distributions in the absence of traps; in this case we simply get $%
\rho =\mu $, both from (\ref{rhoavg}) or (\ref{rhogeom}), and $\overline{u}%
_{\mu }=\frac{\left\vert M\right\vert }{\kappa }$ in both regimes. In fact
we show that this holds for any $\kappa .$ This is one of the key findings
of the paper which we summarize as follows.

\begin{theorem}
\label{thm:alpha} Consider a Brownian particle diffusing without traps under
a heterogeneous diffusivity $D(x)$ in 2D or 3D, whose density is governed by
the Fokker--Planck equation (\ref{eq:Fokker}) (in It\^{o} sense). Wait long
enough until its distribution equilibrates to $\mu(x)$ given by (\ref{alpha}%
). Now turn on the traps. Then the trap density minimizing the mean capture
time starting from $\mu(x)$ is precisely $\rho(x)=\mu(x)$: the traps should
be distributed exactly like the particle itself. This holds for every $%
\kappa=\nu N$. At the optimum, the MFPT $u(x)$ is constant, independent of
the starting location:%
\begin{equation}
u_{\mathrm{opt}}(x)\equiv\frac{\left\vert M\right\vert }{\kappa}\,,
\label{mainresult}
\end{equation}
which is therefore also the mean capture time. Moreover, the optimal density 
$\rho=\mu$ is the unique global minimum of $\overline{u}_\mu$:%
\begin{equation}
\ \int_{\Omega}\mu u\geq\frac{\left\vert M\right\vert }{\kappa}\ \text{for
all densities }\rho {\text{ and any}\ \kappa,}  \label{eq:thmalpha}
\end{equation}
with equality if and only if $\rho=\mu.$
\end{theorem}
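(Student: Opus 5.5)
We work entirely within the homogenized model (\ref{homog}), $\Delta u-\kappa\rho u=-M$ in $\Omega$ with $\partial_n u=0$, and minimize $\overline{u}_\mu=\int_\Omega \mu u$ over trap densities $\rho\ge0$ with $\int_\Omega\rho=1$. The plan has two steps: first check that $\rho=\mu$ yields a constant solution, then prove the lower bound (\ref{eq:thmalpha}) by a one-line integration identity plus Cauchy--Schwarz.

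\emph{Step 1 (the candidate).} Take $\rho=\mu=M/|M|$ and try a constant $u\equiv c$. The equation becomes $-\kappa (M/|M|)c=-M$, which gives $c=|M|/\kappa$. The Neumann condition holds trivially. The operator $-\Delta+\kappa\rho$ with Neumann data is coercive whenever $\rho\ge0$ and $\rho\not\equiv0$, so the solution is unique. Hence $u_{\rm opt}\equiv |M|/\kappa$, and since $\int\mu=1$ we get $\overline{u}_\mu=|M|/\kappa$. This proves (\ref{mainresult}).

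\emph{Step 2 (lower bound for arbitrary $\rho$).} Integrate the homogenized equation over $\Omega$. The Neumann condition kills $\int\Delta u$, which leaves the flux identity
\[
\kappa\int_\Omega \rho u=\int_\Omega M=|M|.
\]
We need to compare $\int\rho u$ with $\int\mu u$, and for that we first establish $u>0$. This follows from the maximum principle: $u$ is a supersolution with nonnegative source and Neumann data, and $M>0$. Next, write $\mu u=\sqrt{\mu u}\sqrt{\mu u}$ and use the fact that $\rho$ and $\mu$ are both probability densities. Cauchy--Schwarz alone does not produce the inequality in the right direction, so instead we test the equation against a better function. Multiply the equation by $u$ and integrate:
\[
\int|\nabla u|^2+\kappa\int\rho u^2=\int M u=|M|\int\mu u.
\]
By Cauchy--Schwarz and the flux identity,
\[
\Big(\frac{|M|}{\kappa}\Big)^2=\Big(\int\rho u\Big)^2\le\int\rho\int\rho u^2=\int\rho u^2 .
\]
Combining the two displays gives
\[
|M|\,\overline{u}_\mu\ \ge\ \kappa\int\rho u^2\ \ge\ \frac{|M|^2}{\kappa},
\]
that is, $\overline{u}_\mu\ge |M|/\kappa$, which is (\ref{eq:thmalpha}) and holds for every $\kappa$.

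\emph{Step 3 (equality case).} Equality forces $\int|\nabla u|^2=0$, so $u$ is constant. It also requires equality in Cauchy--Schwarz, which is automatic once $u$ is constant. Substituting a constant $u=c$ into the equation gives $\kappa\rho c=M$. Normalizing $\int\rho=1$ then yields $c=|M|/\kappa$ and $\rho=\mu$, which proves uniqueness of the optimum.

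The main obstacle is locating the right energy identity: the naive flux identity controls the $\rho$-weighted average of $u$, not the $\mu$-weighted one. Testing against $u$ is what converts $\int Mu$ into $\overline{u}_\mu$. The remaining care points are minor. Positivity and uniqueness for general $\rho$ in $L^1$ need a weak formulation; if $\rho$ is only a measure, the argument should go through with $\rho$ read as a measure in all integrals.
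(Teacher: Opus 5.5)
Your proof is correct and follows the paper's argument essentially step for step. You test (\ref{homog}) against $u$ to get $|M|\int\mu u=\kappa\int\rho u^2+\int|\nabla u|^2$, then bound $\int\rho u^2\ge(\int\rho u)^2=(|M|/\kappa)^2$ using Cauchy--Schwarz and the flux identity, and read off the equality case from $\nabla u=0$. The maximum-principle positivity of $u$ that you establish is never used in this argument; the paper needs it only for Theorem \ref{thm:general}, where it divides by $u$.
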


Instead of minimizing the \emph{average} $\overline{u}_{\omega}$ for a given 
$\omega$, one can ask how to minimize the \emph{worst-case capture time} $%
\max\limits_{x\in\Omega}u(x).$ In fact Theorem \ref{thm:alpha} gives the
answer: since $\int\mu u\leq\max u$ (average with respect to any measure is
bounded by $\max u$), it then immediately follows from Theorem \ref%
{thm:alpha} that the choice $\rho=\mu$ solves the mini-max problem to
minimize the worst-case (i.e. maximum) MFPT. We state this as a theorem.

\begin{theorem}
\label{thm:minimax} The worst-case mean capture time $\max\limits_{x\in
\Omega }u(x)$ is minimized by choosing $\rho =\mu ,$ in which case $u(x)$ is
everywhere constant: $u(x)=\frac{\left\vert M\right\vert }{\kappa }$. In
other words, $\rho =\mu $ solves the mini-max problem $\min\limits_{\rho 
\text{,\ density}}\left\Vert u\right\Vert _{\infty }.$
\end{theorem}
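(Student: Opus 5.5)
The plan is to work within the homogenized model (\ref{homog}), $\Delta u-\kappa\rho u=-M$ with $\partial_n u=0$, and to treat $\rho$ as any probability density. We then combine Theorem \ref{thm:alpha} with the pointwise bound of an average by a maximum. The statement has two parts: for $\rho=\mu$ the MFPT is the constant $|M|/\kappa$, and no other density gives a smaller sup norm.

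\textbf{Step 1 (the optimal value is attained).} Substitute $\rho=\mu=M/|M|$ and the constant trial function $u\equiv |M|/\kappa$ into (\ref{homog}). This gives $\Delta u=0$ and $\kappa\rho u=M$, so the equation holds, and the Neumann condition holds trivially. Since $\kappa\rho\ge0$ and $\rho\not\equiv0$, the operator $-\Delta+\kappa\rho$ with Neumann conditions is coercive on $H^1(\Omega)$, so the solution is unique. Hence $u_{\mathrm{opt}}\equiv|M|/\kappa$ and $\|u\|_\infty=|M|/\kappa$ at $\rho=\mu$.

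\textbf{Step 2 (lower bound for arbitrary $\rho$).} Fix any admissible density $\rho$ with solution $u_\rho$. Because $\mu\ge0$ and $\int_\Omega\mu=1$, we have
\begin{equation*}
\max_{x\in\Omega}u_\rho(x)\;\ge\;\int_\Omega \mu\,u_\rho\,dx\;\ge\;\frac{|M|}{\kappa},
\end{equation*}
where the last inequality is (\ref{eq:thmalpha}) of Theorem \ref{thm:alpha}. Together with Step 1, this shows $\min_\rho\|u_\rho\|_\infty=|M|/\kappa$, attained at $\rho=\mu$. Here $\|u\|_\infty=\max u$, since $u\ge0$ by the maximum principle ($M>0$).

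\textbf{Step 3 (uniqueness of the minimizer, optional).} Suppose $\max u_\rho=|M|/\kappa$. Then both inequalities in Step 2 are equalities. Equality in the second forces $\rho=\mu$ by the uniqueness clause of Theorem \ref{thm:alpha}. So the minimax optimizer is also unique.

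The main obstacle is not in this corollary but in the input (\ref{eq:thmalpha}), which I take as given from Theorem \ref{thm:alpha}. If a self-contained argument were wanted, I would first try a duality identity: multiply (\ref{homog}) by $u$ and integrate to get $\int M u=\int|\nabla u|^2+\kappa\int\rho u^2$. Then, with $\int\mu u=A$, apply Cauchy--Schwarz in the form
\begin{equation*}
A^2=\Bigl(\int \mu u\Bigr)^2\le \int\frac{\mu^2}{\rho}\int\rho u^2 .
\end{equation*}
Integrating the equation itself gives $\kappa\int\rho u=|M|$. Combining this with Jensen's inequality on $\int\mu^2/\rho\ge1$ should yield $A\ge|M|/\kappa$, with equality only when $u$ is constant and $\rho=\mu$. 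The delicate point is handling regions where $\rho$ vanishes.
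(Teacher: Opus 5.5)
Your Steps 1--3 are correct and follow the paper's argument exactly: $\max_{x} u\ge\int_\Omega\mu u\ge|M|/\kappa$ by Theorem~\ref{thm:alpha}, and the constant solution $u\equiv|M|/\kappa$ at $\rho=\mu$ attains this bound. Your Step 3 uniqueness remark is a valid small addition the paper leaves implicit.

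One caution about your optional self-contained sketch, which does not work as written. Combining $(\int\mu u)^2\le\int\frac{\mu^2}{\rho}\int\rho u^2$ with $|M|\int\mu u\ge\kappa\int\rho u^2$ gives only the \emph{upper} bound $\int\mu u\le\frac{|M|}{\kappa}\int\frac{\mu^2}{\rho}$, so it cannot prove (\ref{eq:thmalpha}). The paper instead applies Jensen with respect to the probability measure $\rho\,dx$, namely $\int\rho u^2\ge(\int\rho u)^2=(|M|/\kappa)^2$. This never divides by $\rho$, so it also removes your worry about regions where $\rho$ vanishes.
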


Finally, we have the following generalization of Theorem \ref{thm:alpha} for
any starting density $\omega:$

\begin{theorem}
\label{thm:general} Define $\overline{u}_{\omega}[\rho]=\int_{\Omega}%
\omega(x)u(x)dx$, where $u$ solves (\ref{homog}) for a given density $\rho.$
Then 
\begin{equation}
\overline{u}_{\omega}[\rho]\geq\frac{\left\vert M\right\vert }{\kappa}\left(
\int_{\Omega}\sqrt{\omega\mu}\right) ^{2}\text{ for all densities }%
\omega,\rho {\text{ and any}\ \kappa.}  \label{eq:thmgeneral}
\end{equation}
Moreover the equality is achieved if and only if $\omega=\rho=\mu$. 
\end{theorem}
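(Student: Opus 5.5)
The plan is to combine a pointwise Cauchy--Schwarz inequality with one integration by parts in the homogenized equation (\ref{homog}). The weight $1/u$ is chosen so that the Laplacian term has a definite sign. Throughout, $\rho\ge0$ is a density with $\int_\Omega\rho=1$ and $\kappa>0$.

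First I would establish that $u>0$ in $\overline\Omega$. The operator $-\Delta+\kappa\rho$ with Neumann boundary conditions satisfies a maximum principle, and the source $M=1/D>0$. At a nonpositive minimum $x_0$ of $u$ (interior, or on $\partial\Omega$ via Hopf's lemma with $\partial_n u=0$), we would have $\kappa\rho u\le 0$ and $-\Delta u\le0$ there. This forces $M(x_0)\le0$, a contradiction. Smoothness of $D$ and mild regularity of $\rho$ keep $u$ regular enough for the manipulations below; I would treat this as routine.

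The central step is to write
\begin{equation*}
\int_\Omega\sqrt{\omega\mu}=\int_\Omega\sqrt{\omega u}\,\sqrt{\mu/u}\le\Big(\int_\Omega\omega u\Big)^{1/2}\Big(\int_\Omega\frac{\mu}{u}\Big)^{1/2},
\end{equation*}
so that (\ref{eq:thmgeneral}) reduces to the bound $\int_\Omega M/u\le\kappa$. To prove it, divide (\ref{homog}) by $u$: $M/u=-\Delta u/u+\kappa\rho$. Integrating by parts with $\partial_n u=0$ gives $\int_\Omega(-\Delta u)/u=\int_\Omega\nabla u\cdot\nabla(1/u)=-\int_\Omega|\nabla u|^2/u^2$. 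Using $\int_\Omega\rho=1$ we obtain the identity
\begin{equation*}
\int_\Omega\frac{M}{u}=\kappa-\int_\Omega\frac{|\nabla u|^2}{u^2}\le\kappa .
\end{equation*}
Dividing by $|M|$ gives $\int_\Omega\mu/u\le\kappa/|M|$. Squaring the Cauchy--Schwarz bound then yields $\overline u_\omega[\rho]\ge\frac{|M|}{\kappa}\big(\int_\Omega\sqrt{\omega\mu}\big)^2$ for every $\rho$ and every $\kappa$.

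For the equality case I would track both inequalities. Equality in the integrated identity forces $\nabla u\equiv0$, so $u\equiv c$ is constant. Then (\ref{homog}) reads $\kappa\rho c=M$, and integrating gives $c=|M|/\kappa$ and $\rho=\mu$. Equality in Cauchy--Schwarz forces $\omega u\propto\mu/u$, which with $u$ constant means $\omega\propto\mu$. Normalization then gives $\omega=\mu$. Conversely, if $\omega=\rho=\mu$, the constant $u=|M|/\kappa$ solves (\ref{homog}) and both sides equal $|M|/\kappa$. Specializing to $\omega=\mu$ also recovers Theorem~\ref{thm:alpha}, since then $\int\sqrt{\omega\mu}=1$.

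I expect the only delicate point to be justifying the division by $u$ and the integration by parts with the weight $1/u$. This needs strict positivity of $u$ up to the boundary, and enough regularity of $\rho$ (say bounded, or approximated by smooth densities with the inequality passed to the limit) so that $u\in H^1$ and $\log u$ is admissible. Everything else is algebraic. One further check concerns whether this bound is sharp for $\omega\ne\mu$: it is attained only when $\omega=\mu$, which is consistent with the statement that equality holds if and only if $\omega=\rho=\mu$. For $\omega\neq\mu$ the bound is approached only asymptotically, as in (\ref{ubarwklarge}) for $\kappa\gg1$.
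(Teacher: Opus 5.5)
Your proof is correct and follows the paper's argument essentially step for step. Both use positivity of $u$ from the maximum principle, the identity $\int_\Omega M/u=\kappa-\int_\Omega|\nabla u|^2/u^2\le\kappa$ obtained by dividing (\ref{homog}) by $u$ and integrating by parts, Cauchy--Schwarz with $\sqrt{\omega u}$ against $\sqrt{\mu/u}$ (the paper uses $\sqrt{M/u}$, which differs only by the factor $|M|$), and the same equality analysis through both inequalities, while your remarks on Hopf's lemma and regularity make explicit what the paper treats as routine.
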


Note that the lower bound is the same as in (\ref{ubarwklarge}):\ it is
approached as $\kappa\rightarrow\infty$ but never touched unless $\omega
=\rho=\mu.$ Indeed, Theorem \ref{thm:alpha} can be viewed as the special
case $\omega=\mu$ of Theorem \ref{thm:general}.

The proofs of Theorems \ref{thm:alpha} and \ref{thm:general} are given in \S %
\,\ref{sec:proofs}.

In \S \,\ref{sec:1D} we consider the analogous problem in one dimension.
Here, the problem simplifies, since one-dimensional traps split the domain
into disjoint subintervals. We find that the optimal trap density scales
with $\mu^{1/3}$ for the uniform starting distribution. On the other hand if
the particle starts from the equilibrium distribution $\omega=\mu$, the
optimal distribution becomes proportional to $\mu^{2/3}.$ Both are special
cases of a single formula: for an arbitrary starting distribution $%
\omega(x), $ the optimal density in 1D is proportional to $\left(
\mu(x)\omega(x)\right) ^{1/3},$ recovering $\mu^{1/3}$ for uniform $\omega$
and $\mu^{2/3}$ for $\omega=\mu.$ This is in contrast to 2D, where the same
choice $\omega=\mu$ makes the optimal density exactly proportional to $\mu$
itself, for every value of $\kappa = \nu N$. Finally, we also apply a
similar machinery to determine optimal configurations for heterogeneous
media with spatially-dependent drift, as well as to ``thin'' domains. 

\section{Green's function approach}

\label{sec:2D}

We start by generalizing the classical Green's function approach to MFPT\
computations with heterogeneous diffusion and starting positions with the
goal of optimizing (\ref{ubarw}) over trap positions in 2D or 3D. 

To show our main result (\ref{rhoavg}), we first use standard asymptotics to
reduce the MFPT\ to an energy formulation involving Green's functions \cite%
{kolokolnikov2005optimizing}, which we generalize here to the case of
non-constant $M(x)$ and $\omega(x)$. The argument proceeds in two steps: we
first derive (following \cite{kolokolnikov2005optimizing}) a discrete
pairwise Green's-function energy $E(x_{1},\ldots,x_{N})$ whose minimizer
approximates the optimal trap locations (Lemma \ref{lemma:E2d} below), we
then pass to the continuum limit $N\rightarrow\infty$ of this energy to
obtain the density formula (\ref{rhoavg}).

We first introduce some notation. Let $G(x,y)$ denote the Neumann Green's
function on the domain $\Omega$ satisfying%
\begin{equation}
\left\{ 
\begin{array}{c}
\Delta G=\frac{1}{\left\vert \Omega\right\vert }-\delta(x-y),\quad
x,y\in\Omega\subset\mathbb{R}^{{d}}; \\ 
\partial_{n}G=0\text{ \ on }\partial\Omega;\ \ \ \ \int_{\Omega}G(x,y)dx=0,%
\end{array}
\right.  \label{G}
\end{equation}
and let $G_{\mu}$ be the analogous Green's function with a heterogeneity,
defined via%
\begin{equation}
\left\{ 
\begin{array}{c}
\Delta G_{\mu} =\mu(x)-\delta(x-y),\quad x,y\in\Omega\subset \mathbb{R}^{{d}%
}; \\ 
\partial_{n}{G_{\mu}}=0\text{ \ on }\partial\Omega;\ \ \ \int_{\Omega }{%
G_{\mu}}(x,y)dx=0.%
\end{array}
\right.  \label{GA}
\end{equation}
Finally, let $R$ and $R_{\mu}$ be the regular parts of $G$ and $G_{\mu} $,
respectively: 
\begin{align}
R_{\mu}(x,y) & = G_{\mu}(x,y) - g(x-y),  \label{RA} \\
R(x,y) & =G(x,y)- g(x-y),  \label{R}
\end{align}
where $g$ is the free-space Green's function in $\mathbb{R}^d$: 
\begin{equation}
g(x)=\left\{ 
\begin{array}{ll}
\displaystyle -\frac{1}{2\pi}\log\left\vert x\right\vert , & d=2, \\ 
\displaystyle \frac{1}{4\pi\left\vert x\right\vert }, & d=3.%
\end{array}
\right.  \label{Sd}
\end{equation}
It is convenient to introduce, for an arbitrary function $\psi$ on $\Omega$,
the potential generated by $\psi$, 
\begin{equation}
V_{\psi}(x):=-\int_{\Omega}\psi(y)G(x,y)\,dy,  \label{Vw}
\end{equation}
which, by (\ref{G}), is the unique mean-zero solution of 
\begin{equation}
\Delta V_{\psi}=\psi(x)-\frac{\left\vert \psi\right\vert }{\left\vert
\Omega\right\vert },\qquad\partial_{n}V_{\psi}=0\text{ on }\partial
\Omega,\qquad\int_{\Omega}V_{\psi}(x)dx=0.  \label{Vweq}
\end{equation}
The two densities entering our problem generate the potentials $V_{\mu}$ and 
$V_{\omega}$ respectively: the first comes from heterogeneity, the second
from the starting distribution. Note also that $V_{\psi}$ is linear in $\psi$
so that e.g. $V_{\mu}+V_{\omega}=V_{\mu+\omega}.$ For a uniform starting
distribution $\omega\equiv1/\left\vert \Omega\right\vert $ one has $%
V_{\omega }\equiv0$.

Note that $G_{\mu}$ can be decomposed as follows:%
\begin{equation}
G_{\mu}(x,y)=G(x,y)+V_{\mu}(x),  \label{Ga=G+U}
\end{equation}
where $G$ satisfies (\ref{G}) and $V_{\mu}$ is given by (\ref{Vw}) with $%
\psi=\mu$.

We start with the following asymptotic reduction for $u(x)$. In the outer
region, away from the traps $x_{j}$, we search for $u$ in the form 
\begin{equation}
u(x)\sim T+\sum_{j=1}^{N}C_{j}G_{\mu}(x,x_{j}),  \label{145}
\end{equation}
where the constants $T$ and $C_{j}$ are to be found. Note that this estimate
is asymptotically valid away from $x_{j}$ and corresponds to a monopole
approximation of the trap; the neglected terms are $O(\varepsilon^{2})$
coming from neglecting the dipole terms, and independent of $\nu$ to all
orders in 2D \cite%
{ward1993strong,ward1993summing,lindsay2015narrow,lindsay2025orientation}.

For $x\notin\left\{ x_{1},\ldots,x_{N}\right\} $, Eqs. (\ref{u2d}, \ref{GA})
imply $\Delta u=-M(x)=\sum_{j=1}^{N}C_{j}\mu(x).$ Integrating over $\Omega $
we then obtain 
\begin{equation}
\sum\limits_{j=1}^{N}C_{j}=-\left\vert M\right\vert .  \label{139}
\end{equation}

In the inner region near $x_{k}$, we change variables $x=x_{k}+\varepsilon
y, $ $u(x)=\tilde{u}_k(y)$ to obtain, at leading order, $\Delta_{y}\tilde{u}%
_k\sim0, $ which is to be solved outside the unit ball, i.e., $\left\vert
y\right\vert >1$, with $\tilde{u}_k =0$ on $\left\vert y\right\vert =1$ and $%
\tilde{u}_k$ growing at most like $g$ at infinity. The admissible solutions
are 
\begin{equation}  \label{eq:uk_inner}
\tilde{u}_k(y)\sim A_k \times \left\{ 
\begin{array}{ll}
\log\left\vert y\right\vert , & d=2, \\ 
\bigl(1-\left\vert y\right\vert ^{-1}\bigr) , & d=3,%
\end{array}
\right.
\end{equation}
with the factor $A_k$ to be found by matching the inner and outer solution.
Writing the outer problem with respect to the inner variables near $x_{k}$,
we have:%
\begin{align}
u(x_{k}+\varepsilon y) & \sim T+\sum_{j=1}^{N}C_{j}
G_{\mu}(x_{k}+\varepsilon y,x_{j})  \notag \\
& \sim T+\sum_{j\neq k}^{N}C_{j} G_{\mu}(x_{k},x_{j})+C_{k} \bigl[ %
g(\varepsilon y) + R_{\mu}(x_{k},x_{k})\bigr] .  \label{1234y}
\end{align}
{Using (\ref{Sd}) and matching 
(\ref{eq:uk_inner}) and (\ref{1234y}), one fixes $A_{k}=-C_{k}/(2\pi)$ in 2D
and $A_{k}=-C_{k}/(4\pi\varepsilon)$} in 3D; evaluating at the trap $%
\left\vert y\right\vert =1$ in both cases yields 
\begin{equation}
0 {=} T+\sum_{j\neq k}^{N}C_{j}G_{\mu}(x_{k},x_{j})+C_{k}\left[ 
\frac {1}{\nu}+R_{\mu}(x_{k},x_{k})\right] ,\ \ \ k=1\ldots N,  \label{140}
\end{equation}
with $\nu$ defined in (\ref{nu}). 

There are two ways to proceed. In the first approach, one can consider (\ref%
{140}) together with (\ref{139}) as a linear system of $N+1$ equations for
the $N+1$ unknowns $T$ and $C_{1},\ldots ,C_{N}$, which can be easily solved
numerically; this approach is valid for any trapping strength $\kappa $.
From (\ref{145}) and using $\int_{\Omega }\omega G_{\mu
}(x,x_{j})dx=-V_{\omega }(x_{j})+\int_{\Omega }\omega V_{\mu }$ together
with (\ref{139}), we then obtain a simplified expression 
\begin{equation}
\overline{u}_{\omega }=T-\sum_{j=1}^{N}C_{j}V_{\omega }(x_{j})-\left\vert
M\right\vert \int_{\Omega }\omega V_{\mu }.  \label{fullubar}
\end{equation}%
This expression is the essence of a \emph{hybrid asymptotic-numerical method}
for computing $\overline{u}_{\omega }$: \emph{numeric} because the
coefficients $C_{j}$ and $T$ must be computed numerically by solving (\ref%
{139}, \ref{140}); but \emph{asymptotic} in that the traps in the full
problem (\ref{u2d})\ are replaced by sources and the entire solution is
approximated by the Neumann Green's functions. This method has been
extensively used by Ward and collaborators \cite%
{ward1993strong,ward1993summing,ward2018spots,kolokolnikov2005optimizing,pillay2010asymptotic,cheviakov2010asymptotic}%
, who coined the term \textquotedblleft hybrid\textquotedblright ; also
referring to it as the \textquotedblleft sum-of-logs\textquotedblright\ when
done in 2D. Correspondingly, we shall refer to (\ref{fullubar}), together
with (\ref{140}, \ref{139}), as the \textbf{hybrid system.}

The accuracy of approximation (\ref{fullubar})\  inherits the accuracy of
the approximation (\ref{145}), which is of $O(\varepsilon ^{2}),$ coming
from dropping the dipole terms. As a consequence, the error between (\ref%
{fullubar})\ and (\ref{u2d}) is of $O(\varepsilon ^{2})$ even in the strong
trapping regime $\nu N\gg 1.$

In turn, the second approach corresponds to the asymptotic expansion in $%
\nu\ll1$ of the hybrid system (\ref{fullubar}); as we will see it is only
valid in the weak trapping regime $\nu N\ll1.$ We now expand in $\nu$, 
\begin{equation}
C_{k}=C_{k0}+C_{k1}\nu+C_{k2}\nu^{2} + \ldots, \qquad T=\frac{1}{\nu}%
T_{0}+T_{1}+\nu T_{2} + \ldots  \label{T}
\end{equation}
We then obtain from (\ref{140}), at leading order, 
\begin{equation*}
C_{k0}=-T_{0} \qquad k = 1,\ldots,N.
\end{equation*}
Upon substituting into (\ref{139})\ this yields%
\begin{equation}
T_{0}=\frac{1}{N}\left\vert M\right\vert ,\ \ \ C_{k0}=-\frac{1}{N}%
\left\vert M\right\vert , \qquad k = 1,\ldots,N.  \label{142}
\end{equation}
At the next order in $\nu,$ (\ref{140}) yields%
\begin{equation}
T_{1}+C_{k1}+\sum_{j\neq k}^{N}C_{j0}G_{\mu}(x_{k},x_{j})+C_{k0}R_{\mu
}(x_{k},x_{k})=0,\qquad k=1\ldots N,  \label{141}
\end{equation}
and from (\ref{139}) we get 
\begin{equation}
\sum_{k=1}^{N}C_{k1}=0.  \label{eq:Ck1_sum}
\end{equation}
Summing $N$ equations (\ref{141}) for $k=1\ldots N$ and using (\ref{142})
and (\ref{eq:Ck1_sum}), we then obtain 
\begin{equation*}
T_{1}=\frac{|M|}{N^{2}}\sum_{j=1}^{N}\sum_{k=1}^{N}G_{\mu kj},\qquad\text{%
where }G_{\mu kj}=\left\{ 
\begin{array}{c}
G_{\mu}(x_{k},x_{j}),\ \ k\neq j \\ 
R_{\mu}(x_{k},x_{k}),\ \ k=j%
\end{array}
\right. .
\end{equation*}
The decomposition (\ref{Ga=G+U}) also yields 
\begin{equation*}
T_{1}=\frac{\left\vert M\right\vert }{N^{2}}\left\{
\sum_{k=1}^{N}\sum_{j\neq
k}^{N}G(x_{k},x_{j})+\sum_{k=1}^{N}R(x_{k},x_{k})+N\sum_{k=1}^{N}V_{\mu
}\left( x_{k}\right) \right\} .
\end{equation*}
An analogous expansion shows that 
\begin{equation}
T_{2}=\frac{\left\vert M\right\vert }{N^{3}}\left( \sum_{k=1}^N \sum_{j=1}^N
G_{\mu kj}\right)^{2} - \frac{\left\vert M\right\vert}{N^{2}}
\sum_{k=1}^N\sum_{j=1}^N \sum_{l=1}^N G_{\mu jl} G_{\mu kj}\,,  \label{T2}
\end{equation}
so that, generically, $T_{2}=O(N).$ It follows that the entire asymptotic
procedure is only valid for $\nu N\ll1$, so as not to contradict the
asymptotic orders in (\ref{T}).

It remains to compute $\overline{u}_{\omega}$. From (\ref{145}), we have 
\begin{equation*}
\overline{u}_{\omega} = \int_{\Omega}\omega
u=T+\sum_{j=1}^{N}C_{j}\int_{\Omega}\omega(x){G_{\mu}}(x,x_{j})dx,
\end{equation*}
and by (\ref{Ga=G+U}), (\ref{Vw}) and the symmetry of $G$, 
\begin{equation*}
\int_{\Omega}\omega(x){G_{\mu}}(x,x_{j})dx=-V_{\omega}(x_{j})+\int_{\Omega
}\omega{V_{\mu}}.
\end{equation*}
Using 
(\ref{139}) and 
(\ref{142}), we obtain 
\begin{equation}
\overline{u}_{\omega} \sim T+\frac{\left\vert M\right\vert }{N}%
\sum_{k=1}^{N}V_{\omega}(x_{k})-\left\vert M\right\vert \int_{\Omega}\omega
V_{\mu}.  \label{ubarwexp}
\end{equation}
Note that the last two terms in (\ref{ubarwexp})\ are of the same order as
the terms in $T_{1}.$ Together with $T\sim T_{0}/\nu+T_{1}$ this gives the
following.

\begin{lemma}
\label{lemma:E2d} Define a pairwise Green's function energy to be%
\begin{equation}
E(x_{1},\ldots x_{N})=\sum_{k=1}^{N}\sum_{j\neq
k}^{N}G(x_{k},x_{j})+\sum_{k=1}^{N}R(x_{k},x_{k})+N\sum_{k=1}^{N}{%
V_{\mu+\omega}(x_{k})-{N^{2}}\int_{\Omega}\omega{V_{\mu}}},  \label{greenE}
\end{equation}
where $G$ and $R$ are defined in (\ref{G}, \ref{R}), $V_{\mu+\omega}$ in (%
\ref{Vw}) with $\psi=\mu+\omega$, and $\nu$ in (\ref{nu}). In the limit $%
\varepsilon\rightarrow0$ and for a fixed $N,$ we have%
\begin{equation}
{\overline{u}_{\omega}=\int_{\Omega}\omega(x)u(x)dx\sim\frac{\left\vert
M\right\vert }{\nu N}+\frac{\left\vert M\right\vert }{N^{2}}E(x_{1},\ldots
x_{N})+{\mathcal{O}}\left( \nu N\right) }\text{ \ \ as }\varepsilon
\rightarrow0.
\end{equation}
\end{lemma}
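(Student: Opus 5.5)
The proof assembles the expansions already derived and checks that each piece lands in the right order. I start from the exact representation of the averaged MFPT in the outer region, (\ref{fullubar}), and substitute the $\nu$-expansions (\ref{T}) of $T$ and $C_k$. The leading-order result (\ref{142}) gives $T_0/\nu = |M|/(\nu N)$ and $C_{k0} = -|M|/N$. Replacing $C_j$ by $C_{j0}$ in the middle term of (\ref{fullubar}) gives
\[
\frac{|M|}{N}\sum_{k=1}^{N} V_\omega(x_k),
\]
with error $-\nu\sum_j C_{j1}V_\omega(x_j)+\ldots$. The last term, $-|M|\int_\Omega \omega V_\mu$, is exact. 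This reproduces (\ref{ubarwexp}).

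Next I insert the explicit formula for $T_1$ obtained by summing (\ref{141}) and using (\ref{eq:Ck1_sum}). This step uses the decomposition (\ref{Ga=G+U}). Off the diagonal, $G_\mu(x_k,x_j)=G(x_k,x_j)+V_\mu(x_k)$. On the diagonal, $R_\mu(x_k,x_k)=R(x_k,x_k)+V_\mu(x_k)$, because $G_\mu$ and $G$ share the same singular part $g$. Summing over $j$ therefore contributes $N\sum_k V_\mu(x_k)$.

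I then factor out $|M|/N^2$ from $T_1$ and from the two $V_\omega$ and $\int\omega V_\mu$ terms:
\[
\frac{|M|}{N}\sum_k V_\omega(x_k)=\frac{|M|}{N^2}\,N\sum_k V_\omega(x_k),
\qquad
|M|\int_\Omega \omega V_\mu=\frac{|M|}{N^2}\,N^2\int_\Omega\omega V_\mu .
\]
By linearity, $V_\mu+V_\omega=V_{\mu+\omega}$. Collecting terms gives exactly $\frac{|M|}{N^2}E(x_1,\ldots,x_N)$ with $E$ as in (\ref{greenE}).

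The main obstacle is justifying that the remainder is $\mathcal{O}(\nu N)$ rather than merely $\mathcal{O}(\nu)$, since we only know $T_2=O(N)$ generically. My plan is to bound the next-order contributions in units of $|M|$.

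\emph{The $\nu T_2$ term.} Each factor $G_{\mu kj}$ is $O(1)$ for fixed, separated traps. The double sum $\sum_{k,j}G_{\mu kj}$ is $O(N^2)$ and the triple sum is $O(N^3)$, so by (\ref{T2}) we get $T_2=O(N)$ and hence $\nu T_2=O(\nu N)$.

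\emph{The $-\nu\sum_j C_{j1}V_\omega(x_j)$ term.} From (\ref{141}), $C_{k1}=-T_1-\sum_j C_{j0}G_{\mu kj}$, which is $O(1)$ per trap. Because $\sum_k C_{k1}=0$, the sum $\sum_j C_{j1}V_\omega(x_j)$ is at most $O(N)$. Its contribution is therefore $O(\nu N)$ as well.

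Higher orders should scale like $(\nu N)^m$. This follows from writing (\ref{140}) as $(I+\nu\mathcal{G})C=\ldots$, where the matrix $\mathcal{G}=(G_{\mu kj})$ has operator norm $O(N)$, and expanding in a Neumann series. So the expansion is a series in $\nu N$, and the remainder is $\mathcal{O}(\nu N)$ as stated. The fixed-$N$, $\varepsilon\to0$ limit guarantees that $\nu N\to0$, so the series converges. The $O(\varepsilon^2)$ dipole errors are subdominant to these terms.
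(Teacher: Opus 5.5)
Your proposal is correct and follows the paper's own derivation. You start from the hybrid representation (\ref{fullubar}), insert $C_{k0}=-|M|/N$ and the summed form of $T_1$, and split $G_\mu=G+V_\mu$ (with $R_\mu=R+V_\mu$ on the diagonal) to collect everything into $E$; this is exactly how the paper passes through (\ref{ubarwexp}) to the lemma. Your remainder analysis (bounding $\nu\sum_j C_{j1}V_\omega(x_j)$ and expanding $(I+\nu\mathcal{G})^{-1}$ as a Neumann series) only makes explicit what the paper leaves at ``$T_2=O(N)$ generically''; note that for fixed $N$ the bounds $O(\nu N)$ and $O(\nu)$ coincide, so what that step really establishes is the $N$-dependence of the constant, not a sharper bound than $O(\nu)$.
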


The optimal trap locations correspond to the minimum of the pairwise Green's
function energy $E$ in (\ref{greenE}). Note that we must have $\nu N\ll1$
for the asymptotics to be applicable. This is quite restrictive especially
in 2D: $\varepsilon$ must be exponentially small in $N$, see (\ref{nu}).

To compute optimal configurations of the energy $E$, we will study its
gradient flow, as was done e.g. in \cite%
{Serfaty2020,kolokolnikov2014tale,varizy2026pattern,fetecau2011swarm,kolokolnikov2011stability}%
. We introduce a time-dependent position $x_{k}(t)$ (with artificial time $t$%
) along with the gradient flow $x_{k}^{\prime}(t)=-\partial_{x_{k}}
E(x_{1}(t),\ldots x_{N}(t)).$ Then $x_{k}(t)\rightarrow x_{k}$ as $%
t\rightarrow\infty$, where $x_{k}$ is a local minimum of $E.$ We now compute%
\begin{equation*}
\partial_{x_{k}}E = 2\left( \sum_{j\neq
k}^{N}\nabla_{x}G(x_{k},x_{j})\right) +2\nabla_{x}R(x_{k},x_{k})+N\nabla
V_{\mu+\omega }(x_{k}) .
\end{equation*}
The associated gradient flow (up to time rescaling by $\frac{1}{2N}$ for
convenience) then becomes%
\begin{equation}
-x_{k}^{\prime}(t)=\frac{1}{N}\left( \sum_{j\neq
k}\nabla_{x}G(x_{k},x_{j})+\nabla_{x}R(x_{k},x_{k})\right) +\frac{1}{2}%
\nabla V_{\mu+\omega }(x_{k}).  \label{ode}
\end{equation}
The {global} minimizer of $E$ corresponds to a stable equilibrium of the ODE
system (\ref{ode}):\ $x_{k}=\lim_{t\rightarrow\infty}x_{k}(t)$. We now look
at the continuum density formulation in the limit $N\rightarrow\infty$ of
the steady state of (\ref{ode}). Following \cite%
{kolokolnikov2014tale,varizy2026pattern}, we define a coarse-grained density
to be%
\begin{equation*}
\rho(x,t)=\frac{1}{N}\sum\limits_{j=1}^{N}\delta(x-x_{j}(t)).
\end{equation*}
Then (\ref{ode}) is equivalent to the continuum density evolution equation $%
\rho_{t}+\nabla \cdot \left( v\rho\right) =0$, where $v$ is the associated velocity
field:%
\begin{equation}
v(x)=\frac{1}{2}\nabla{V_{\mu+\omega}(x)}+\int_{\Omega}\nabla_{x}G(x,y)%
\rho(y)dy.  \label{153}
\end{equation}
At the steady state, we have $v(x)=0$ so that $\nabla\cdot v=0.$ Applying
the divergence to both sides of (\ref{153}) and using (\ref{Vweq}) and (\ref%
{G}), we obtain%
\begin{equation*}
0=\nabla\cdot v=\frac{1}{2}\Delta V_{\mu+\omega}(x) + \int_{\Omega}\Delta
G(x,y)\rho(y)dy=\frac{1}{2} \left( \mu(x)+\omega(x)\right) -\rho(x) .
\end{equation*}
It follows that $\rho(x)=\frac{1}{2}\left( \mu(x)+\omega(x)\right) $, which
is (\ref{rhoavg}). $\blacksquare$

\subsection{Numerics}

We now consider several numerical experiments for a disk and a rectangular
domain, for which an efficient computation of the Neumann Green's function $%
G $ is available.

\begin{figure}[tb]
\includegraphics[width=0.95\textwidth]{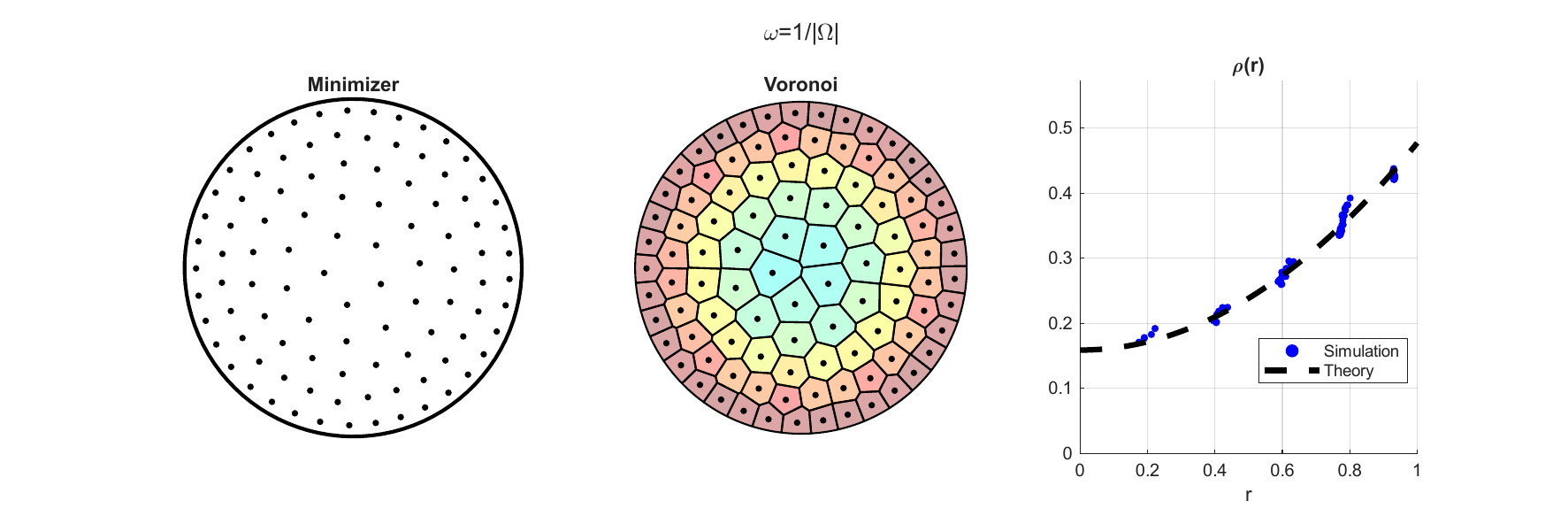} %
\includegraphics[width=0.95\textwidth]{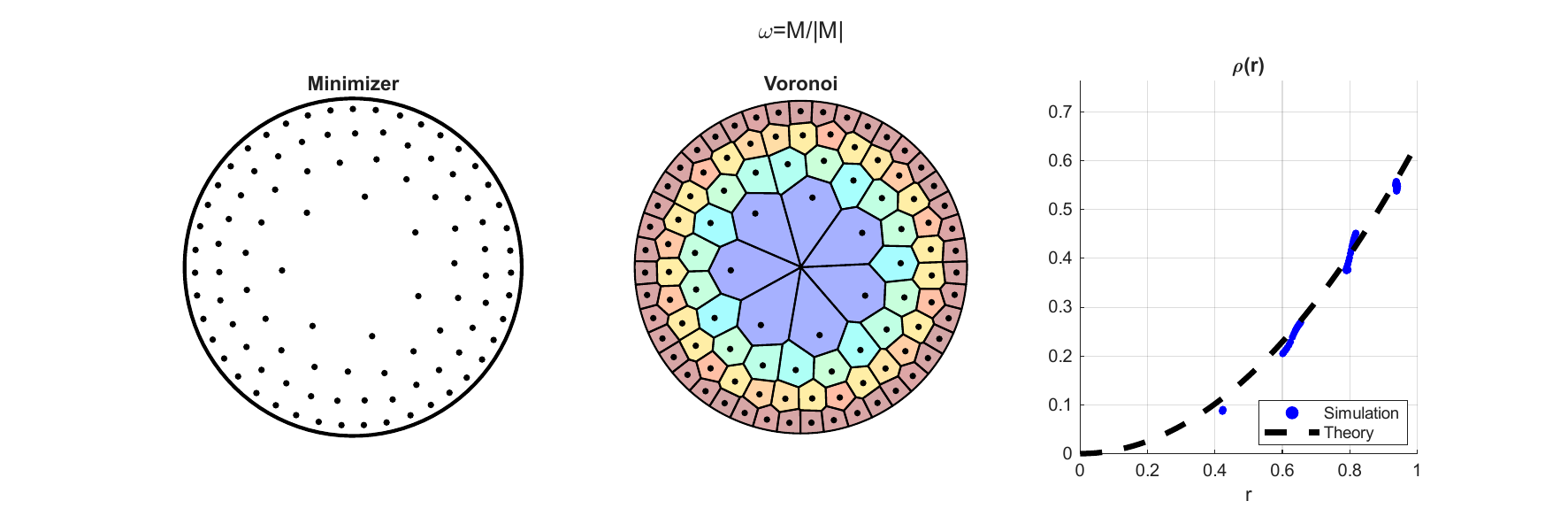}
\caption{ Optimal trap distribution with $N=100$ and $M(r)=r^{2}$ inside a
unit disk with (i)\ $\protect\omega=1/\left\vert \Omega\right\vert $ (top)\
and (ii)\ $\protect\omega =M/\left\vert M\right\vert $ (bottom). Left: trap
positions that minimize the Green's energy (\protect\ref{greenE}), computed
numerically (see text); Middle: the corresponding Voronoi cells. Right:
Comparison of numerical approximation of $\protect\rho(x)$ obtained from the
Voronoi diagram (see text) shown as \textquotedblleft
simulation\textquotedblright, with continuum limit (\protect\ref{230})\
shown as \textquotedblleft theory\textquotedblright.}
\label{fig:2d}
\end{figure}

\textbf{Disk. }When $\Omega$ is a unit disk, an explicit expression for the
Neumann Green's function is available \cite%
{kolokolnikov2009spot,kolokolnikov2003reduced},%
\begin{equation}
G(x,y)=-\frac{1}{2\pi}\log\left( \left\vert x-y\right\vert \right) +\frac {1%
}{4\pi}\left[ -\log(\left\vert x\right\vert ^{2}\left\vert y\right\vert
^{2}+1-2x\cdot y)+\left\vert x\right\vert ^{2}+\left\vert y\right\vert ^{2}-%
\frac{3}{2}\right] .
\end{equation}
As an example, we take the radial dependence for both $M(x)$ and $\omega(x)$%
: 
\begin{equation}
M(x)=a\left\vert x\right\vert ^{2}+b ,\qquad\omega(x)=\frac{c\left\vert
x\right\vert ^{2}+d}{\pi\left( c/2+d\right) } \,.  \label{diskAom}
\end{equation}
The effective trap density in the large-$N$ limit from (\ref{rhoavg}) is
then given by%
\begin{equation}
\rho(x)=\frac{1}{2\pi}\left[ \frac{a\left\vert x\right\vert ^{2}+b}{a/2+b} + 
\frac{c\left\vert x\right\vert ^{2} + d}{c/2+d}\right] .  \label{230}
\end{equation}
Solving (\ref{Vweq}) using radial coordinates, we find that 
\begin{align*}
V_{\mu+\omega}(x) & = -\left[ \frac{a}{4\pi\left( a+2b\right) }+\frac{c}{%
4\pi\left( c+2d\right) }\right] \left( \left\vert x\right\vert ^{2}-\frac{%
\left\vert x\right\vert ^{4}}{2}-\frac{1}{3}\right) , \\
\nabla G(x,y) & =-\frac{1}{2\pi}\frac{x-y}{\left\vert x-y\right\vert ^{2}}+%
\frac{1}{2\pi}\left[ -\frac{x\left\vert y\right\vert ^{2}-y}{\left\vert
x\right\vert ^{2}\left\vert y\right\vert ^{2}+1-2x\cdot y}+x\right] , \\
\nabla{V_{\mu+\omega}} & ={-\left[ \frac{a}{2\pi\left( a+2b\right) }+\frac{c%
}{2\pi\left( c+2d\right) }\right] \left( 1-\left\vert x\right\vert
^{2}\right) x} , \\
\nabla R(x,y) & =+\frac{1}{2\pi}\left[ -\frac{x\left\vert y\right\vert ^{2}-y%
}{\left\vert x\right\vert ^{2}\left\vert y\right\vert ^{2}+1-2x\cdot y}+x%
\right] ,\ \ \ \nabla R(x,x)=\frac{1}{2\pi}\left[ \frac{\left\vert
x\right\vert ^{2}-2}{\left\vert x\right\vert ^{2}-1}\right] x.
\end{align*}
With these explicit expressions at hand, it is easy to apply the gradient
descent method to ODE (\ref{ode}) to compute the optimal trap positions for
any given $a,b,c,d$. We illustrate this for $M(x)=|x|^{2}$ and for (i)
constant $\omega(x)=1/\pi$ and (ii)\ $\omega(x)=M/\left\vert M\right\vert
=|x|^{2} /(\pi/2).$ We used $N=100$ traps and applied the forward Euler
method to compute the steady state of the ODE (\ref{ode}). A few minutes on
a standard computer is enough to compute the steady state within $10^{-5}$,
measured as $\max_{k}\left\vert x_{k}^{\prime}(t)\right\vert $ at the final
time. The result is shown in Fig. \ref{fig:2d} (left side).

Once the steady state of (\ref{ode}) is computed, we construct Voronoi cells
(shown in the middle of Fig. \ref{fig:2d}) to estimate an effective density $%
\rho(x)$. We use $\rho(x_{j})\approx1/(NS_{j})$, where $S_{j}$ is the area
of the Voronoi cell around $x_{j}$. This is shown on the scatter plot on the
right, where $\left\vert x_{j}\right\vert $ is plotted versus the estimated $%
\rho(x_{j}),\ \ j=1\ldots N$. For comparison, this scatter plot is
superimposed with the explicit formula (\ref{230}) for the effective trap
density. Excellent agreement is observed. We emphasize that this is a
comparison between the minimization of the Green's energy and our
theoretical predictions; both relations are valid only in the limit $\nu\ll1$%
. An ultimate validation would require comparison with a numerical solution
of the original PDE (\ref{u2d}). However, as said earlier, the validity
condition $\nu\ll1$ would require dealing with extremely small $\varepsilon$%
. For this reason, we relegate such a comparison to \S \,\ref{sec:homog}, in
which the use of the homogenization technique will relax this condition.

\textbf{Rectangle. }Consider a rectangle $\Omega=(0,L)\times(0,H)$ and take
the heterogeneity to be one-dimensional and the weight to be uniform:%
\begin{equation}
M(x,y)=f(x),\qquad\omega=1/\left\vert \Omega\right\vert .  \label{recA}
\end{equation}
Then (\ref{rhoavg})\ yields 
\begin{equation}
\rho(x,y)=\frac{1}{2H}\left( \frac{f(x)}{\left\vert f\right\vert }+\frac {1}{%
L}\right) ,\qquad\left\vert f\right\vert :=\int_{0}^{L}f(x)dx,
\label{rectrhof}
\end{equation}
and 
\begin{equation}
\nabla{V_{\mu}(x,y)=\left( \frac{1}{H}\left[ \frac{F(x)}{\left\vert
f\right\vert }-\frac{x}{L}\right] ,0\right) ,\qquad
F(x):=\int_{0}^{x}f(s)\,ds;\qquad V_{\omega}=0,}  \label{recgradU}
\end{equation}
the latter because the starting distribution is uniform, so that $%
V_{\mu+\omega}=V_{\mu}$. For $f=ax+b$ this gives 
\begin{equation}
\nabla{V_{\mu}}(x,y)=\left( \frac{ax^{2}/2+bx}{\left( aL^{2}/2+bL\right) H}-%
\frac{x}{LH},\ 0\right) .  \label{recgradUab}
\end{equation}
The Neumann Green's function admits a rapidly convergent Fourier series
representation, with a resummation technique used to obtain a rapidly
convergent series for its regular part (see \cite{kolokolnikov2009spot},
Section 4.2 for these explicit formulas that we do not reproduce here). We
implemented these formulas in MATLAB and used them for computing the steady
state of the energy flow (\ref{ode}).

\begin{figure}[tb]
\begin{center}
\includegraphics[width=0.48\textwidth]{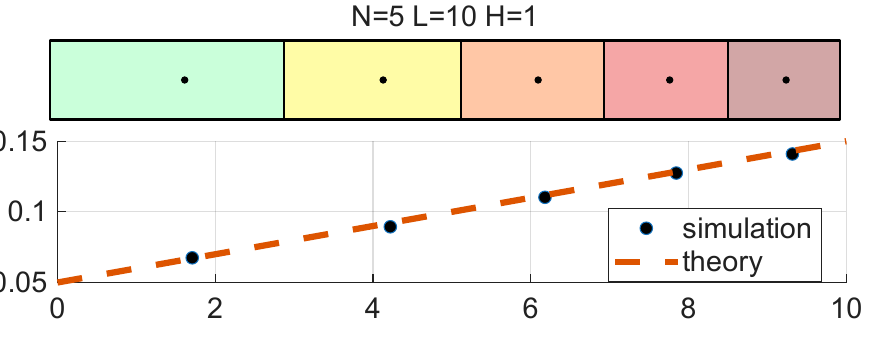} %
\includegraphics[width=0.48\textwidth]{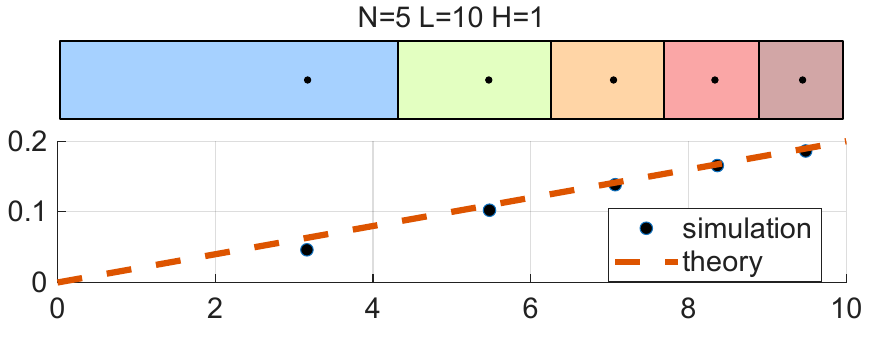}\\[0pt]
\includegraphics[width=0.48\textwidth]{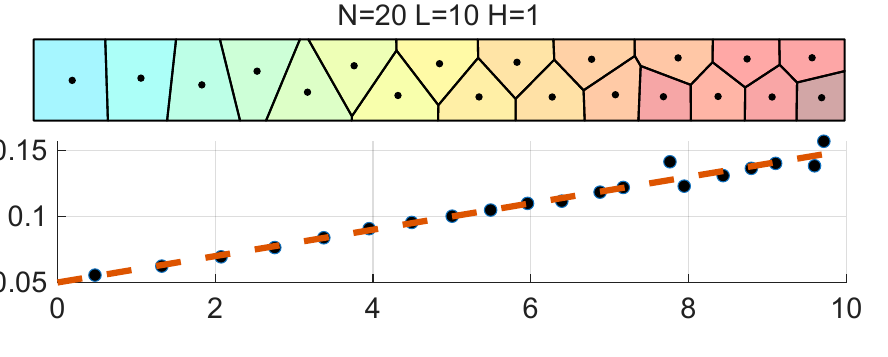} %
\includegraphics[width=0.48\textwidth]{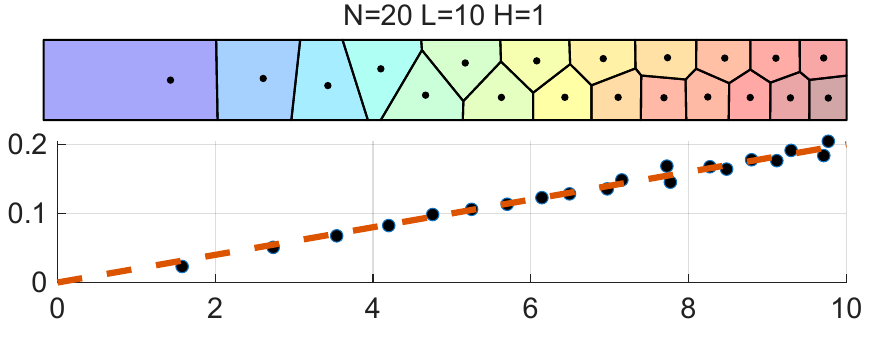}\\[0pt]
\includegraphics[width=0.48\textwidth]{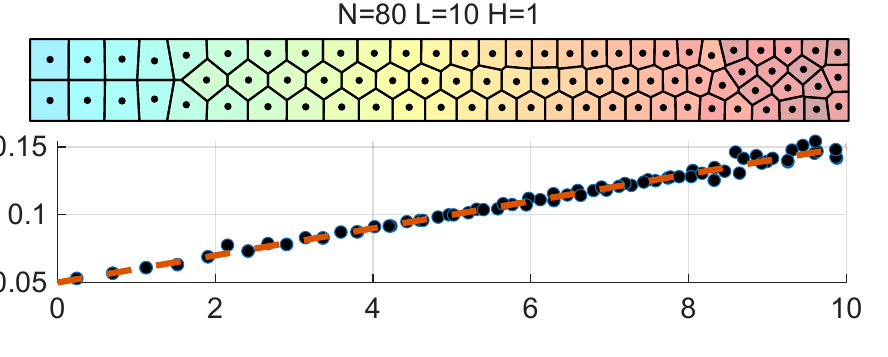} %
\includegraphics[width=0.48\textwidth]{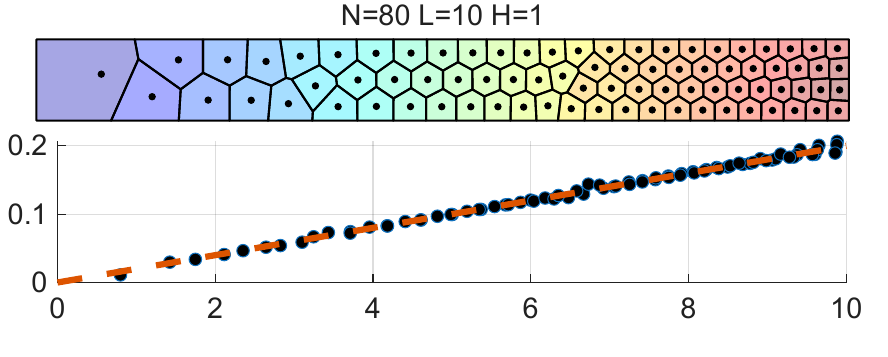}
\end{center}
\caption{ Optimal trap locations inside an elongated rectangle $%
(0,10)\times(0,1)$ with $M(x,y)=x$, for $N=5,20$ and $80$ traps, from top to
bottom. Left column: uniform starting distribution $\protect\omega%
\equiv1/\left\vert \Omega\right\vert $. Right column: $\protect\omega=%
\protect\mu=M/\left\vert M\right\vert $. In each panel the strip shows the
optimal trap positions together with their Voronoi cells, and the graph
below it compares the effective density obtained from the Voronoi cells
(dots) with the asymptotic prediction $\protect\rho(x)$ (dashed) from (%
\protect\ref{rhoavg}): $\protect\rho(x)=x/100+1/20$ for the left column and $%
\protect\rho =\protect\mu=x/50$ for the right. Note that the density is
strictly positive everywhere for uniform $\protect\omega$, but vanishes at $%
x=0$ for $\protect\omega=\protect\mu$.}
\label{fig:rect2}
\end{figure}

Figure \ref{fig:rect2} shows the results of a numerical experiment on a long
rectangle with $L=10$ and $H=1$. Here, we fixed $M(x,y)=x$ so that $%
\left\vert f\right\vert =50$ and (\ref{rectrhof}) gives $%
\rho(x,y)=x/100+1/20 $, and varied $N$, showing $N=5,20$ and $80$. Each row
of the figure shows the same $N$ for the two starting distributions: a
uniform $\omega= 1/|\Omega|$ on the left, and $\omega=\mu$ on the right, for
which (\ref{rhoavg}) collapses to $\rho=\mu=x/50$. It is interesting to note
that as $N$ is increased, the optimal configuration of traps changes from a
single line ($N=5$) to a mix of one and two rows ($N=20$), and so on. The
optimal configuration is denser on the right (lower diffusivity) and sparser
on the left (high diffusivity); note that (\ref{rectrhof}) predicts $%
\rho\rightarrow1/(2LH)>0$ rather than zero density as $x\rightarrow0$. The
variable density results in \textquotedblleft coexistence\textquotedblright\
of multi-row patterns (e.g. from one to four rows when $N=80$). Regardless
of pattern complexity, the overall effective density is in excellent
agreement with the theory, even for $N=5$, where the configuration is
essentially one-dimensional.

The two columns differ most near $x=0$, where diffusion is fastest. For a
uniform start the traps do not thin out indefinitely there: (\ref{rectrhof})
retains the floor $1/(2LH)$ contributed by $\omega$, and the leftmost trap
sits well inside the domain at every $N$. For $\omega=\mu$ that floor is
absent, $\rho$ vanishes linearly as $x\rightarrow0$, and the left end of the
rectangle is left progressively emptier --- there is no point placing traps
where the particle is unlikely to start. This is the case in which the weak-
and strong-trapping predictions agree for every $\kappa$, by (\ref%
{mainresult}), so the agreement seen in the right column does not by itself
test the $\kappa\ll1$ restriction; the left column does.

\section{Homogenization}

\label{sec:homog}

As the computation of $T_{2}$ in (\ref{T2})\ shows, the asymptotics we
derived in Lemma \ref{lemma:E2d}\ are only valid assuming that $\nu N\ll1.$
This is a particularly severe restriction in 2D\ (e.g. $%
\varepsilon=0.01,N=10 $ yields $\nu N\approx13.6$). In this section, we
develop an alternative approach based on homogenization \cite%
{cioranescu1982strange,marchenko2006homogenization,berezhkovskii2004boundary,berezhkovskii2006homogenization,muratov2008boundary,cheviakov2013narrow}%
. In the limit of many traps, the problem (\ref{u2d})\ can be homogenized as
follows. In 3D, consider the inner region near the trap $x_{k}.$ As in (\ref%
{1234y}), we rescale near $x_{k}$: $x=x_{k}+\varepsilon y;$ to leading order
we obtain $\Delta_{y}u\sim0,\ $with$\ u=0$ at $\left\vert y\right\vert =1.$
The solution is then given by {$u\sim b_{k}\bigl(1- \frac{1}{\left\vert
y\right\vert }\bigr) \sim b_k - \frac{b_{k}\varepsilon}{\left\vert x-x_{k}
\right\vert },$} where $b_{k}$ is to be determined. In the intermediate
regime $\varepsilon\ll\left\vert x-x_{k}\right\vert \ll1$ we then obtain {$%
u\rightarrow b_{k}$.}

In the outer region we then replace the trap at $x_{k}$ by a delta source
with strength {$4\pi b_{k}\varepsilon$} in the homogenized equation: 
\begin{align}
\Delta u & =-M(x)+\sum\limits_{j=1}^N 4\pi b_{j}\varepsilon\delta(x-x_{j}) 
\notag \\
& =-M(x)+4\pi\varepsilon\sum\limits_{j=1}^N u(x_{j})\delta(x-x_{j}).
\label{aeras}
\end{align}
Finally, recall that the density is given by $\rho(x)=\frac{1}{N}\sum_j
\delta(x-x_{j})$ so we replace $\sum_j u(x_{j})\delta(x-x_{j})\sim
u(x)N\rho(x)$ that leads to the homogenization limit (\ref{homog}), with $%
\nu $ given by (\ref{nu}). A similar derivation in 2D also yields (\ref%
{aeras})\ but with $4\pi\varepsilon$ replaced by $\frac{2\pi}{%
\log(1/\varepsilon)}.$

The minimization problem is to find the optimal $\rho(x)$ subject to $%
\int_\Omega\rho dx=1$ such that the $\omega$-averaged MFPT $\overline{u}%
_{\omega} =\int_{\Omega}\omega u$ of (\ref{ubarw}) is minimized. In general,
there is no explicit solution, and the optimal density $\rho\left(
x\right) $ has to be found numerically. 
However the regimes $\kappa\ll1$ and $\kappa\gg1$ can be computed explicitly
for an arbitrary $\omega$, as we now show.

\textbf{Weak trapping limit }$\kappa\ll1$\textbf{.} Note that this
corresponds precisely to the validity region $\nu N\ll1$ of Lemma \ref%
{lemma:E2d}; we now rederive this result via homogenization approach.

We expand $u$ in powers of $\kappa$ as follows%
\begin{equation*}
u=\frac{1}{\kappa}u_{0}+u_{1}+\kappa u_{2}+\ldots
\end{equation*}
so that 
\begin{equation*}
\Delta u_{0}=0,\ \ \ \ \ \Delta u_{1}=\rho u_{0}-M(x),\ \ \ \Delta
u_{2}=\rho u_{1},
\end{equation*}
subject to the boundary conditions\ $\partial_{n}u_{i}=0$ on $\partial\Omega$%
, with $i=0,1,2\ldots.$ Then $u_{0}$ is constant and integrating the
equation for $u_{1}$ we obtain $u_{0}=\left\vert M\right\vert .$ The Green's
second identity yields 
\begin{align*}
u_{1}(y) & =-\left\vert M\right\vert \int_{\Omega}G(x,y)\left(
\rho(x)-\mu(x)\right) dx+K \\
& =\left\vert M\right\vert V_{\rho-\mu}(y)+K ,
\end{align*}
where $V_{\psi}$ is defined in (\ref{Vw})$\ $and $K$ is a constant that can
be determined from the equation for $u_{2}$, yielding $\int_{\Omega}\rho
u_{1}=0,$ so that 
\begin{equation*}
u_{1}(y)=\left\vert M\right\vert \left( V_{\rho-\mu}(y)-\int_{\Omega}\rho
V_{\rho-\mu}\right) .
\end{equation*}
We therefore obtain the two-term expansion 
\begin{equation}
\overline{u}_{\omega}=\frac{\left\vert M\right\vert }{\kappa}+\left\vert
M\right\vert \int_{\Omega}(\omega-\rho)V_{\rho-\mu}\, +O(\kappa).
\label{ubarwhomog}
\end{equation}

The optimal $\rho(x)$ minimizes (\ref{ubarwhomog}) subject to $\int_{\Omega
}\rho=1.$ Taking a variational derivative and applying a Lagrange
multiplier, the optimal density satisfies%
\begin{equation*}
V_{\omega}-2V_{\rho}+V_{\mu}=\lambda.
\end{equation*}
Applying $\Delta$ to both sides yields $\omega-2\rho+\mu=0$ and recovers (%
\ref{rhoavg}). Finally, substituting $\rho=\frac{1}{2}\left( \mu
+\omega\right) $ into (\ref{ubarwhomog}), we obtain 
\begin{equation*}
\overline{u}_{\omega}=\frac{\left\vert M\right\vert }{\kappa}+\frac{%
\left\vert M\right\vert }{4}\int_{\Omega}(\omega-\mu)V_{\omega-\mu}\,{%
+O(\kappa).}
\end{equation*}
The integral can be further simplified using the identity $\int_{\Omega}\psi
V_{\psi}=-\int_{\Omega}\left\vert \nabla V_{\psi}\right\vert ^{2}$, valid
for any mean-zero $\psi$ (multiply $\Delta V_{\psi}=\psi$ by $V_{\psi}$ and
integrate by parts, the boundary term vanishing by $\partial_{n}V_{\psi}=0$%
), resulting in (\ref{ubarsmallk}). Taking $\omega=\mu$ we get $\rho=\mu$
and $\overline{u}_{\omega}=\frac{\left\vert M\right\vert }{\kappa}$,
recovering (\ref{mainresult})\ for the special case of small $\kappa=\nu N.$

\textbf{Strong trapping limit }$\kappa\gg1$. Here, we expand $u=\frac {1}{%
\kappa}u_{0}+\ldots.$ At the leading order we then obtain $u_{0}=\frac {M}{%
\rho},$ so that $\overline{u}_{\omega}\sim\frac{1}{\kappa}\int_{\Omega }%
\frac{\omega M}{\rho}.$ Minimization under the constraint $\int_\Omega
\rho=1 $ using a Lagrange multiplier yields $\frac{\omega M}{\rho^{2}}%
=\lambda,$ so that 
\begin{equation}
\rho=\frac{\sqrt{\omega M}}{\int_{\Omega}\sqrt{\omega M}} \,,  \label{rhobig}
\end{equation}
which is identical with (\ref{rhogeom}); substituting (\ref{rhobig}) into $%
\overline {u}_{\omega}\sim\kappa^{-1}\int_{\Omega}\omega M/\rho$\ yields (%
\ref{ubarwklarge}). Choosing $\omega=\mu$ again gives $\rho=\mu$, $\overline{%
u}_{\omega}\sim\frac{1}{\kappa}\left\vert M\right\vert $, recovering (\ref%
{mainresult})\ for the special case of large $\kappa=\nu N.$

\textbf{Arbitrary trapping}. For general $\kappa$ the optimal density has no
closed form, but it is characterized by a simple optimality system. We seek
to minimize $J\left[ \rho\right] =\int_{\Omega}\omega u.$ Perturbing $%
\rho\rightarrow\rho+{\delta\rho}$ and $u\rightarrow u+\delta u$ yields $%
\delta J=\int_{\Omega}\omega\delta u\ $which we need to rewrite in terms of $%
\delta\rho$. We have $\Delta\delta u-\kappa\rho\,\delta u=\kappa
u\,\delta\rho.$ Introduce an adjoint $v$ to satisfy $\Delta v-\kappa\rho
v=-\omega$ in $\Omega$, with Neumann boundary condition on $\partial\Omega$.
Multiplying this equation by $\delta u$ and integrating by parts we obtain%
\begin{equation*}
{-}\int_{\Omega}\omega\delta u=\int_{\Omega}\delta u\left( \Delta
v-\kappa\rho v\right) =\int_{\Omega}v\left( \Delta\delta u-\kappa
\rho\,\delta u\right) =\int_{\Omega}v\kappa u\,\delta\rho.
\end{equation*}
This yields $\delta J/\delta\rho={-}\kappa uv.$ Finally we introduce a
Lagrange multiplier for the constraint $\int_{\Omega}\rho=1$ so that
stationarity of $\overline{u}_{\omega}$ requires $\kappa uv=\lambda.$ In
summary, the optimal density $\rho$ satisfies 
\begin{equation}
\left\{ 
\begin{array}{l}
\Delta u-\kappa\rho u=-M\quad\text{in }\Omega,\qquad\partial_{n}u=0\ \text{%
on }\partial\Omega, \\ 
\Delta v-\kappa\rho v=-\omega\quad\text{in }\Omega,\qquad\partial _{n}v=0\ 
\text{on }\partial\Omega, \\ 
\,u\,v=\func{const},\qquad\int_{\Omega}\rho=1.%
\end{array}
\right.  \label{kappa-any}
\end{equation}
In particular, if $\omega=\mu$, then $v\propto u$, and the optimality
condition reduces to $u=\mathrm{const}$: the optimal density equalizes the
MFPT over all starting points, and (\ref{homog}) gives $\rho=\mu$ together
with $u\equiv\left\vert M\right\vert /\kappa$, for every $\kappa$, which is
precisely (\ref{mainresult}). For a homogeneous medium ($M$ constant, $%
\omega $ uniform) this says that the optimal density is uniform, $%
\rho=1/\left\vert \Omega\right\vert $.

\subsection{Proofs of Theorems \protect\ref{thm:alpha}--\protect\ref%
{thm:general}}

\label{sec:proofs}

We now give the proofs of Theorems \ref{thm:alpha} and \ref{thm:general};
Theorem \ref{thm:minimax} follows from Theorem \ref{thm:alpha} as noted in
the Introduction.

\textbf{Proof of Theorem \ref{thm:alpha}.} Multiply (\ref{homog})\ by $u$
and integrate to obtain 
\begin{align*}
\left\vert M\right\vert \int_{\Omega}\mu u & =\kappa\int_{\Omega}\rho
u^{2}-\int_{\Omega}u\Delta u \\
& =\kappa\int_{\Omega}\rho u^{2}+\int_{\Omega}\left\vert \nabla u\right\vert
^{2} \\
& \geq\kappa\int_{\Omega}\rho u^{2}.
\end{align*}
By Cauchy--Schwarz, $\int_{\Omega}\rho u^{2}\geq\left( \int_{\Omega}\rho
u\right) ^{2}$ (recall that $\rho\geq0$ and $\int_{\Omega}\rho=1$), and
integrating (\ref{homog}) we obtain $\int_{\Omega}\rho u=\left\vert
M\right\vert /\kappa$. Putting these together yields%
\begin{equation*}
\int_{\Omega}\mu u\geq\frac{\left\vert M\right\vert }{\kappa}.
\end{equation*}
Equality requires both $\int_{\Omega}\left\vert \nabla u\right\vert ^{2}=0$
and $\int_{\Omega}\rho u^{2}=\left( \int_{\Omega}\rho u\right) ^{2}.$ The
former implies that $u$ is constant, which in turn implies the latter.
Substituting a constant $u$ into (\ref{homog})\ gives $\rho\propto\mu$,
hence $\rho=\mu$ by normalization, and then $u=\left\vert M\right\vert
/\kappa$. Conversely, for $\rho=\mu$ the constant $u\equiv\left\vert
M\right\vert /\kappa$ solves (\ref{homog}), so equality holds. $\blacksquare$

\textbf{Proof of Theorem \ref{thm:general}.} Since $M>0$ and $\rho \geq 0$,
the maximum principle gives $u>0$ in $\Omega $. Dividing (\ref{homog})\ by $%
u $ and integrating by parts, using $\int_{\Omega }\rho =1$, we obtain 
\begin{equation}
\int_{\Omega }\frac{M}{u}=\kappa -\int_{\Omega }\frac{\Delta u}{u}=\kappa
-\int_{\Omega }\left\vert \frac{\nabla u}{u}\right\vert ^{2}\leq \kappa .
\label{eq:Aoveru}
\end{equation}%
Next, use the Cauchy--Schwarz inequality in the form $\int a^{2}\geq \left(
\int ab\right) ^{2}/\int b^{2}$. Choosing $a=\sqrt{\omega u}$ and $b=\sqrt{%
M/u}$ yields 
\begin{equation}
\int_{\Omega }\omega u\geq \frac{\left( \int_{\Omega }\sqrt{\omega M}\right)
^{2}}{\int_{\Omega }M/u}.  \label{eq:cauchyC}
\end{equation}%
From (\ref{eq:Aoveru})\ the right-hand side is at least $\left( \int_{\Omega
}\sqrt{\omega M}\right) ^{2}/\kappa =\frac{\left\vert M\right\vert }{\kappa }%
\left( \int_{\Omega }\sqrt{\omega \mu }\right) ^{2}$, which yields the
inequality (\ref{eq:thmgeneral}). This inequality becomes an equality when
we take $\omega =\rho =\mu $ (since in this case the solution is $%
u=\left\vert M\right\vert /\kappa $). Conversely, suppose that $\int_{\Omega
}\omega u=\frac{\left\vert M\right\vert }{\kappa }\left( \int_{\Omega }\sqrt{%
\omega \mu }\right) ^{2}.$ This is only possible if the inequalities (\ref%
{eq:Aoveru})\ and (\ref{eq:cauchyC}) both become equalities. The former
implies $\nabla u=0$, so that $u$ is constant; the latter is the
Cauchy--Schwarz inequality, which becomes an equality only if $\sqrt{\omega u%
}=C\sqrt{\left\vert M\right\vert \mu /u}$ for some constant $C>0$. With $u $
constant this forces $\omega =C_{2}\mu $ for some constant $C_{2};$ since {$%
\mu $ and }$\omega $ are{\ both densities}, this implies $\omega =\mu $; and
then, exactly as in the proof of Theorem \ref{thm:alpha}, a constant $u$ in (%
\ref{homog}) forces $\rho =\mu $. $\blacksquare $

\subsection{Numerics}

\label{sec:homognum}

\begin{figure}[ptb]
\begin{center}
\includegraphics[width=0.99\textwidth]{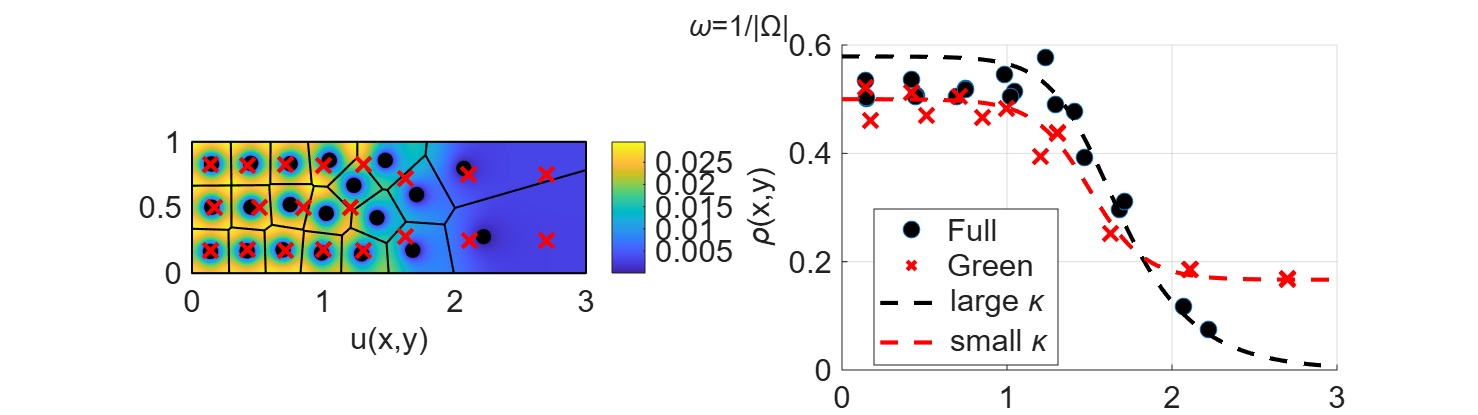} %
\includegraphics[width=0.99\textwidth]{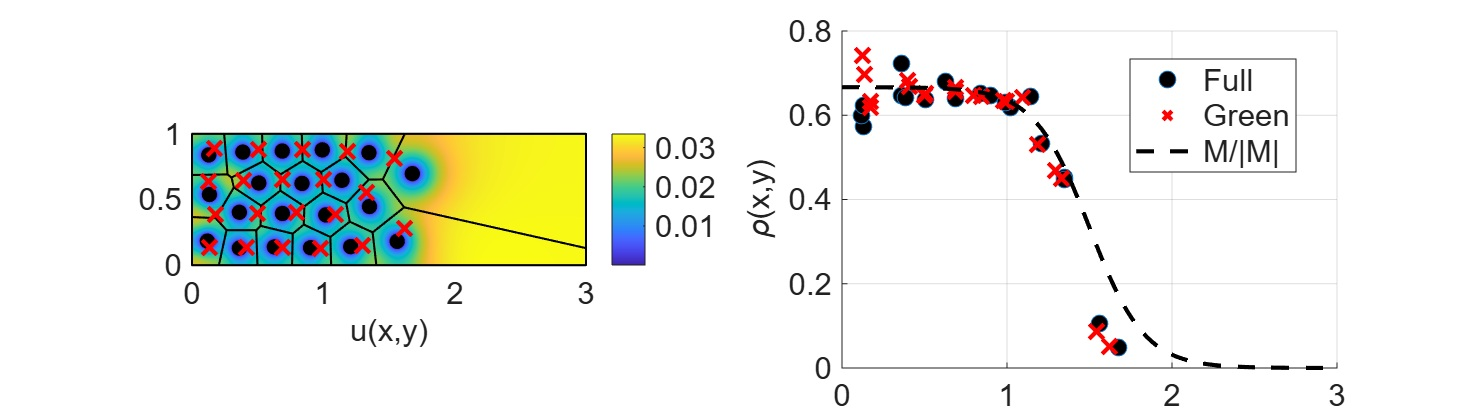}
\end{center}
\caption{ Top: Optimal trap locations inside a rectangle $(0,3)\times(0,1)$,
with $M(x,y)=\tanh(3(1.5-x))+1$ (slow on the left, fast on the right), $%
\protect\omega=1/\left\vert \Omega\right\vert ,$ and with $N=20,\protect%
\varepsilon=0.05$ (so that $\protect\kappa=\protect\nu N\approx42$). Black
dots ('full') correspond to minimizing the full problem (\protect\ref{u2d})
numerically. Red crosses ('Green') correspond to the minimization of the
Green's energy (\protect\ref{greenE}). Top left:\ color plot shows the
corresponding solution $u(x,y)$ of the full problem {(\protect\ref{u2d})}
with optimal trap locations, and the corresponding Voronoi cells. Optimal
trap locations computed via Green's energy minimization are also shown. Top
right: effective trap density obtained from the two computed configurations,
compared with the small-$\protect\kappa$ prediction (\protect\ref{rhoavg})
and the large-$\protect\kappa$ prediction (\protect\ref{rhobig}). Since $%
\protect\kappa\approx42$ here, the large-$\protect\kappa$ formula is the
relevant one, and it is the one the full numerics follows. Bottom:\ same
parameters as top, except $\protect\omega=M/\left\vert M\right\vert .$ Both
full and Green's function optimizations approximate the continuum density $%
\protect\rho=M/\left\vert M\right\vert .$}
\label{fig:kappa}
\end{figure}

\textbf{Rectangle in 2D. }To test the homogenization theory against the
original problem (\ref{u2d}), we used Claude AI\ to implement a 2D numerical
solver as well as a trap location optimizer based on the original full
problem (\ref{u2d}). Details of the implementation 
are given in Appendix \ref{app:numerics}.

Figure \ref{fig:kappa}(top) illustrates the two regimes for $\Omega
=(0,3)\times(0,1)$ with $M(x,y)=\tanh(3(1.5-x))+1$, $N={20}$ and $%
\varepsilon=0.05$, with $\omega\equiv1/\left\vert \Omega\right\vert $, a
uniform starting distribution. Then $\kappa=\nu N\approx{42}$, corresponding
to the strong-trapping regime. As expected, the full numerics follow the
strong-trapping limit (\ref{rhobig}) rather than the weak-trapping limit (%
\ref{rhoavg}), which corresponds to the Green's energy. This clearly
illustrates that homogenization theory is usually a superior choice to the
Green's function energy (Lemma \ref{lemma:E2d}) for most parameter values in
2D.

In Figure \ref{fig:kappa}(bottom), we use the same parameters except that we
start with the equilibrium distribution $\omega=\mu.$ In this case, both
strong- and weak-trapping limits coincide; both yield $\rho =\mu.$
Correspondingly, both the full solver (black dots) and the Green's energy
optimizer (red crosses) converge to the $\rho=\mu$ curve as expected. {%
Note that the shown solution $u(x,y)$ for $N = 20$ is not yet
constant over the domain, as it should be in the limit $N\to \infty$.}

\begin{figure}[tbp]
\begin{center}
\includegraphics[width=0.99\textwidth]{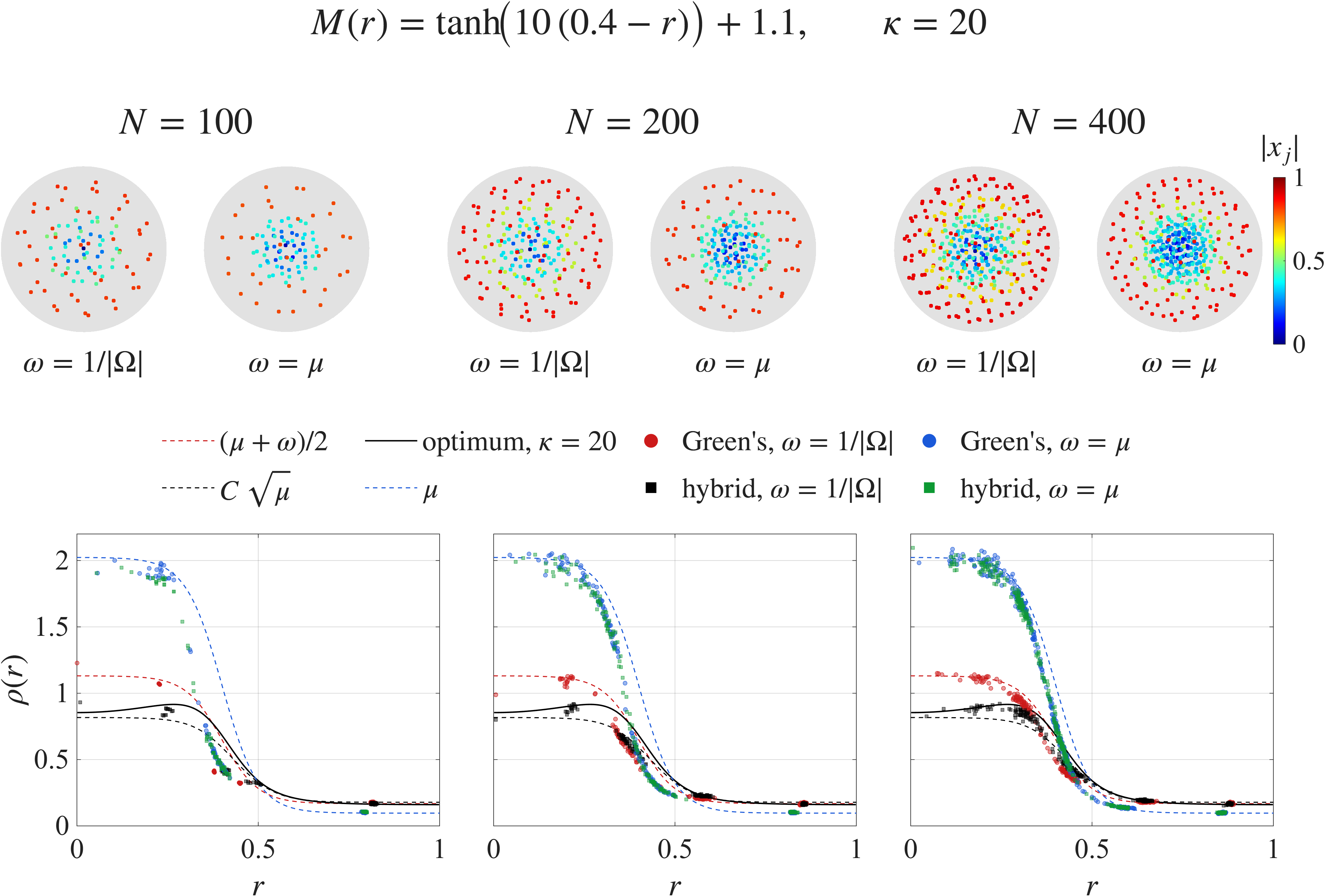}
\end{center}
\caption{ Optimal trap configurations in the unit ball for $M(r)=\tanh
(10(0.4-r))+1.1$ (a slow core inside a fast shell, with an interface of
width $\sim 0.1$ at $r=0.4$), with $N=100$, $200$ and $400$ traps (left to
right), all at $\protect\kappa =\protect\nu N=20$, i.e., the strong-trapping
regime. Top: the minimizers of the hybrid objective (\protect\ref{fullubar})
for the uniform starting distribution $\protect\omega \equiv 1/\left\vert
\Omega \right\vert $ and for $\protect\omega =\protect\mu $, colored
according to $\left\vert x_{j}\right\vert $. Bottom: the effective densities 
$\protect\rho (x_{j})\approx 1/(N\left\vert \mathcal{V}_{j}\right\vert )$
from the Voronoi cells, plotted against $\left\vert x_{j}\right\vert $, for
the minimizers of the Green's energy (\protect\ref{greenE}) (circles) and of
the hybrid objective (\protect\ref{fullubar}) (squares), for $\protect\omega %
\equiv 1/\left\vert \Omega \right\vert $ (red and black) and for $\protect%
\omega =\protect\mu $ (blue and green). For $\protect\omega \equiv
1/\left\vert \Omega \right\vert $ the curves are: the small-$\protect\kappa $
prediction (\protect\ref{rhoavg}), $(\protect\mu +\protect\omega )/2$ (red
dashed); the large-$\protect\kappa $ prediction (\protect\ref{rhobig}), here 
$\propto \protect\sqrt{\protect\mu }$ (black dashed); and the optimum of the
optimality system (\protect\ref{kappa-any}) at $\protect\kappa =20$ (black
solid); for $\protect\omega =\protect\mu $ all predictions coincide at $%
\protect\rho =\protect\mu $ (blue dashed), by (\protect\ref{mainresult}).
The Green's-energy minimizer follows the small-$\protect\kappa $ curve even
though $\protect\kappa =20$, while the hybrid solution follows the finite-$%
\protect\kappa $ optimum, which at this $\protect\kappa $ is already close
to the large-$\protect\kappa $ limit; for $\protect\omega =\protect\mu $ the
two objectives give indistinguishable configurations and densities. The
interface at $r=0.4$ is thinner than the trap spacing at $N=100$, which is
why the measured densities lag the predictions there; the lag disappears as $%
N$ grows and the core is resolved by several shells of traps. }
\label{fig:sphere3d}
\end{figure}

\textbf{Unit sphere in 3D.} For a 3D unit ball, an explicit formula for the
Neumann Green's function is available \cite{cheviakov2011optimizing}:

\begin{equation}
G(x,y)=\frac{1}{4\pi}\left[ \frac{1}{\left\vert x-y\right\vert }+\frac {1}{%
\sqrt{1-2x\cdot y+\left\vert x\right\vert ^{2}\left\vert y\right\vert ^{2}}}%
+\log\frac{2}{1-x\cdot y+\sqrt{1-2x\cdot y+\left\vert x\right\vert
^{2}\left\vert y\right\vert ^{2}}}+\frac{\left\vert x\right\vert
^{2}+\left\vert y\right\vert ^{2}}{2}-\frac{14}{5}\right] .  \label{Gball}
\end{equation}
The regular part (\ref{R}) is 
\begin{equation}
R(x,x)=\frac{1}{4\pi}\left[ \frac{1}{1-|x|^{2}}-\log(1-|x|^{2})+|x|^{2}-%
\frac {14}{5}\right] .  \label{Rball}
\end{equation}
This makes it easy to optimize the Green's energy formulation. {\ However,
solving the full optimization problem with relatively large $N$ (say, few
hundred) is still a difficult task. For this reason,} we use the hybrid
formulation (\ref{140}), rather than its two-term series expansion as given
by Lemma \ref{lemma:E2d}. In other words, we solve a linear system of $N+1$
equations to get the unknowns $T$ and $C_{1},\ldots,C_{N}$ that determine $%
\overline{u}_{\omega}$ via (\ref{fullubar}). 
We then minimize (\ref{fullubar}%
) over the trap positions using gradient descent with Barzilai--Borwein step
lengths and an analytic gradient obtained from the adjoint, as described in
Appendix \ref{app:numerics}. The Green's-energy configurations shown for
comparison are produced by the same minimization applied to $E$ of (\ref%
{greenE}), and the effective densities are estimated from Voronoi cell
volumes as in 2D, $\rho(x_{j})\approx1/(N\left\vert \mathcal{V}%
_{j}\right\vert )$ with $\left\vert \mathcal{V}_{j}\right\vert $ being the
volume of the Voronoi cell of $x_{j}$ clipped to the unit ball.

We validated
the hybrid system against a full numerical solve of (\ref{u2d}) with
$\varepsilon=0.04, N=20$, requiring 8 million cells.
The two agreed to within less than $1\%$-- see Appendix
\ref{app:numerics}. However we have used the hybrid system only for
optimization and comparison with continuum limit.

Figure \ref{fig:sphere3d} illustrates the results for 
$M(r)=\tanh \left( 10\left( 0.4-r\right) \right) +1.1,$ 
which models a slow core surrounded by a fast shell, with a sharp interface
of width about $0.1$ at $r=0.4$. The three columns correspond to $N=100$, $%
200$ and $400$ traps, all at the same trapping strength $\kappa =20$ (so
that $\varepsilon $ is halved from one column to the next). For a uniform
starting distribution the two objectives give visibly different
configurations: the Green's-energy minimizer (valid only when $\nu N\ll 1$)
follows the small-$\kappa $ prediction (\ref{rhoavg}), whereas the minimizer
of the hybrid system (\ref{fullubar}) follows the finite-$\kappa $ optimum
of (\ref{kappa-any}), which at $\kappa =20$ already lies close to the large-$%
\kappa $ limit (\ref{rhobig}). For $\omega =\mu $ the two objectives produce
the same configuration, with density $\rho =\mu $, as (\ref{mainresult})
predicts for every $\kappa $.

The agreement between the discrete configurations and the continuum
densities is spatially non-uniform, and depends on how the interface width
compares with the distance between neighbouring traps. At $N=100$ the
interface is under-resolved, resulting in a significant leftward shift of
the trap density near the interface. The agreement is visibly improved as $N$
grows. Similar improvement can be observed by making the interface less
sharp (not shown). We invite the reader to experiment with this effect in 2D
using the web app \cite{webapp}.

\section{One-dimensional setting}

\label{sec:1D}

We now consider several MFPT problems in one dimension, including
inhomogeneous diffusion, advection, and a one-dimensional approximation of a
thin-domain problem. Unlike the two-dimensional case, a one-dimensional trap
splits the interval into two subintervals that can be solved independently,
and then combined to compute the overall MFPT. This gives a significant
simplification for the analysis. In fact, all of the problems we consider
will reduce to minimizing the \textquotedblleft energy\textquotedblright\ of
the general form 
\begin{equation}
E(x_{1},x_{2},\ldots x_{N})=\sum_{k=0}^{N}f(l_{k},x_{k}),\quad
l_{k}=x_{k+1}-x_{k},  \label{4:05}
\end{equation}
inside an interval $[0,L]$ with $0=x_{0}<x_{1}<\ldots<x_{N}<x_{N+1}=L$ (for
simplicity, we assumed here that the boundaries $x_{0}=0$ and $x_{N+1}=L$
are also trapping). Here, $f(l,x)$ is some function that depends on the
specific problem and approximates the contribution of an interval of length $%
l$ located near $x$ to the objective (here, the MFPT), weighted by the
starting distribution when the latter is non-uniform. We are interested in
computing the minimizer $x_{1},\ldots,x_{N}$ in the limit $%
N\rightarrow\infty $. In this limit, define a continuous function $\ell(x)$
such that $l_{k}\rightarrow\ell(x_{k})$ as $N\rightarrow\infty.$

\begin{prop}
\label{prop:cont} Let $(x_{1},x_{2},\ldots ,x_{N})$ be the minimizer of (\ref%
{4:05}), i.e., 
\begin{equation}
(x_{1},x_{2},\ldots ,x_{N})=\argmin_{0<x_{1}^{\prime }<\ldots
<x_{N}^{\prime }<L}\{E(x_{1}^{\prime },x_{2}^{\prime },\ldots ,x_{N}^{\prime
})\}.  \label{qoa}
\end{equation}%
Let $\ell (x)$ solve%
\begin{equation}
f(\ell ,x)-\ell f_{\ell }(\ell ,x)=\lambda ,  \label{1854}
\end{equation}%
where the constant $\lambda $ (a Lagrange multiplier) is determined by the
integral condition%
\begin{equation}
N=\int_{0}^{L}\frac{1}{\ell (x)}dx\,.  \label{1855}
\end{equation}%
Here and below, subscripts on $f$ denote partial derivatives with respect to
its first argument $\ell $ and its second argument $x$; $l_{k}$ continues to
denote the discrete gaps and $\ell (x)$ their continuum limit. Then {\ $%
x_{j+1}-x_{j}\sim \ell (x_{j})$ }for $N\gg 1.$
\end{prop}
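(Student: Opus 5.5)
Two routes lead to (\ref{1854})--(\ref{1855}); I will carry out the discrete one and use the continuum one as a consistency check.

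\textbf{Discrete stationarity.} The minimizer (\ref{qoa}) is an interior critical point of $E$, since collisions or boundary contact cost more in the regimes considered. Each $x_k$ ($1\le k\le N$) enters only the two terms $f(l_{k-1},x_{k-1})$ and $f(l_k,x_k)$, with $\partial l_{k-1}/\partial x_k=1$ and $\partial l_k/\partial x_k=-1$. Setting $\partial_{x_k}E=0$ gives
\begin{equation*}
f_\ell(l_{k-1},x_{k-1}) - f_\ell(l_k,x_k) + f_x(l_k,x_k)=0 .
\end{equation*}
In the limit $N\gg1$ we have $l_k\sim\ell(x_k)=O(1/N)$, and a Taylor expansion in $x$ gives
\begin{equation*}
f_\ell(l_{k-1},x_{k-1})-f_\ell(l_k,x_k)\approx -\ell\,\frac{d}{dx}\bigl[f_\ell(\ell(x),x)\bigr] .
\end{equation*}
The stationarity condition therefore becomes
\begin{equation*}
\ell\,\frac{d}{dx}\bigl[f_\ell(\ell(x),x)\bigr]=f_x(\ell(x),x)
\end{equation*}
at leading order. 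Now differentiate $\Phi(x):=f(\ell(x),x)-\ell(x)f_\ell(\ell(x),x)$ along $x$:
\begin{equation*}
\Phi' = f_\ell\ell' + f_x - \ell' f_\ell - \ell\,\frac{d}{dx}f_\ell = f_x-\ell\,\frac{d}{dx}f_\ell = 0 .
\end{equation*}
Hence $\Phi\equiv\lambda$ is constant, which is exactly (\ref{1854}).

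\textbf{Counting constraint.} The number of gaps in $[0,L]$ equals $N+1\approx N$, and each gap occupies length $\ell(x)$, so $N\sim\int_0^L dx/\ell(x)$. This is (\ref{1855}), and it fixes $\lambda$.

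\textbf{Continuum check.} Write $E\approx\int_0^L f(\ell(x),x)\,\ell(x)^{-1}dx$, using $\sum_k F(x_k)\approx\int F\,dx/\ell$. Minimize this subject to (\ref{1855}) with a multiplier $\lambda$, i.e.\ take pointwise variations of $\int (f-\lambda)/\ell$ in $\ell$. This gives
\begin{equation*}
\frac{f_\ell}{\ell}-\frac{f-\lambda}{\ell^2}=0,
\end{equation*}
which is again (\ref{1854}).

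\textbf{Main obstacle.} The delicate step is justifying the Taylor replacement. It requires the discrete gaps to vary slowly, $l_k-l_{k-1}=O(1/N^2)$, i.e.\ that $l_k$ converges to a smooth profile. It also requires the relevant scaling, with $f$ and its derivatives evaluated at $\ell=O(1/N)$, to be such that the neglected terms are of higher order than those retained. I would state this smoothness as an assumption, in the same spirit as the formal continuum limit of \S\,\ref{sec:2D}. For concrete $f$ such as $f\propto l^3 M$, the assumption can be checked a posteriori, since the discrete equation is a recursion whose solution is locally close to equal spacing.
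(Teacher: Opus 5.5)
Your argument is correct at the same formal level as the paper's, but your main route is different; your ``continuum check'' is in fact the paper's entire proof. The paper reparametrizes at once by $dk=dx/\ell$ and replaces the sum (\ref{4:05}) by $\int_0^L f(\ell,x)\,dx/\ell$ under the constraint (\ref{1855}). Because that integrand is pointwise in $\ell$, the Euler--Lagrange equation is algebraic, and (\ref{1854}) follows in one line with $\lambda$ a Lagrange multiplier.

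You instead start from the exact critical-point equations $f_\ell(l_{k-1},x_{k-1})-f_\ell(l_k,x_k)+f_x(l_k,x_k)=0$, which the discrete minimizer actually satisfies. You reduce them to a first-order ODE for $\ell(x)$ and exhibit $f-\ell f_\ell$ as a first integral. This is the Beltrami (energy) identity. Take $k$ as time, $x_k$ as position and $l_k$ as velocity; then $f$ is a Lagrangian with no explicit $k$-dependence, so $\ell f_\ell-f$ is a conserved energy. The constant $\lambda$ is minus that energy (equivalently $dE_{\min}/dN$), and it is fixed by the ``travel time'' $N+1$ from $0$ to $L$. The equivalence of fixed-time and fixed-energy formulations is exactly why your constant coincides with the paper's multiplier.

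The paper's route is shorter and avoids both the interior-critical-point assumption and the Taylor bookkeeping. However, it tacitly assumes that minimizing the continuum surrogate gives the limit of the discrete minimizers. Yours works on equations the minimizer satisfies exactly, and isolates the one assumption both arguments share (the paper postulates $l_k\to\ell(x_k)$ with $\ell$ continuous).

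To be precise about that assumption, you need $l_k-l_{k-1}=\ell\ell'+o(N^{-2})$, not merely $O(N^{-2})$. For $f\propto\ell^3$, the term $f_{\ell\ell}\,(l_k-l_{k-1})$ is of the same order $N^{-3}$ as the retained terms, while the dropped Taylor remainder is $O(N^{-4})$.

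Your route also gives a practical diagnostic: $f(l_k,x_k)-l_kf_\ell(l_k,x_k)$ should be nearly independent of $k$ along a numerically computed minimizer. Like the paper's argument, it uses only first-order conditions, so it does not by itself single out the minimizer among critical points.
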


\textbf{Proof.} We pass to the continuum directly. Since $dx/\ell
(x_{k})\sim dk$, the sum (\ref{4:05}) becomes%
\begin{equation*}
E=\sum_{k=0}^{N}f(l_{k},x_{k})\sim \int_{0}^{N}f(\ell (x_{k}),x_{k})\,dk\sim
\int_{0}^{L}f(\ell (x),x)\frac{dx}{\ell (x)},
\end{equation*}%
while the number of traps is $\int_{0}^{N}dk=\int_{0}^{L}\frac{dx}{\ell (x)}%
=N.$ The large-$N$ limit of (\ref{4:05}) is therefore the constrained
variational problem%
\begin{equation}
\min_{\ell (x)}\int_{0}^{L}f(\ell (x),x)\frac{dx}{\ell (x)}\text{ \ subject
to \ }\int_{0}^{L}\frac{dx}{\ell (x)}=N,\ \ \ell (x)>0.  \label{varprob}
\end{equation}%
Introducing a Lagrange multiplier $\lambda $ for the constraint, this is
equivalent to%
\begin{equation*}
\min_{\ell (x),\ \lambda }\int_{0}^{L}\frac{f(\ell (x),x)-\lambda }{\ell (x)}%
dx.
\end{equation*}%
The integrand depends on $\ell $ pointwise and not on its derivatives, so
the variational derivative is an ordinary derivative,%
\begin{equation*}
\frac{\partial }{\partial \ell }\left( \frac{f(\ell ,x)-\lambda }{\ell }%
\right) =0\iff \ell f_{\ell }-\left( f-\lambda \right) =0,
\end{equation*}%
which is (\ref{1854}); the constraint in (\ref{varprob}) is (\ref{1855}). $%
\blacksquare $

We now investigate several problems by using this framework.

\subsection{Heterogeneous diffusion}

\label{sec:D}

Consider a Brownian particle governed by a stochastic ODE $dx_{t} =\sqrt{%
2D(x_{t})}dW_{t},$ where $D(x) > 0$ is a position-dependent diffusivity. The
corresponding Fokker--Planck equation is 
\begin{equation}
p_{t}=\left( D^{\theta}\left( D^{1-\theta}p\right) _{x}\right) _{x},
\label{lawley}
\end{equation}
where the parameter $0\leq\theta\leq1$ depends on the stochastic
interpretation, see \cite%
{vanKampen1981,Sokolov2010,tunglawley2025escape,tunglawley2026stochastic}
for details. In particular, the It\^{o} interpretation corresponds to $%
\theta=0$, whereas the Stratonovich interpretation corresponds to $%
\theta=1/2 $.

Assuming $0=x_{0}<x_{1}<\ldots<x_{N+1}=L$, the MFPT\ problem with $N$ traps
reads%
\begin{equation}
D^{1-\theta}\left( D^{\theta}u_{x}\right) _{x}=-1 \quad\text{on}~
(x_{j},x_{j+1}), \qquad u(x_{j})=0 \quad(j=0,\ldots, N),  \label{1733}
\end{equation}
and, as in 2D and 3D, we are interested in minimizing the $\omega$-averaged
MFPT 
\begin{equation}
\overline{u}_{\omega}=\int_{0}^{L}\omega(x)u(x)\,dx,\qquad\int_{0}^{L}%
\omega=1,\ \ \omega\geq0,  \label{ubarw1D}
\end{equation}
as a function of trap locations $x_{j}$; the uniform choice $\omega\equiv1/L$
recovers the global MFPT $T=\int_{0}^{L}u(x)\,dx/L$. Our goal is to find the
optimal density distribution of the trap locations as a function of $D(x)$
and $\omega(x)$, in the limit of large $N$.

The ODE (\ref{1733}) for each interval $(x_{j},x_{j+1})$ can be easily
solved in terms of quadratures: 
\begin{equation}
u(x) = C_{j} \int_{x_{j}}^{x} D^{-\theta}(y) dy - \int_{x_{j}}^{x} dy_{1}
D^{-\theta}(y_{1}) \int_{x_{j}}^{y_{1}} dy_{2} \, D^{\theta-1}(y_{2}) \qquad
x\in(x_j,x_{j+1}),
\end{equation}
where the constant 
\begin{equation}
C_{j} = \left( \int_{x_{j}}^{x_{j+1}} D^{-\theta}(y) dy\right) ^{-1}
\int_{x_{j}}^{x_{j+1}} dy_{1} D^{-\theta}(y_{1}) \int_{x_{j}}^{y_{1}} dy_{2}
\, D^{\theta-1}(y_{2})
\end{equation}
ensures the boundary condition $u(x_{j+1}) = 0$. In the following, we deduce
more explicit asymptotic results, assuming that $l_{k}=x_{k+1}-x_{k}$ is
small.

Let $x=x_{k}+l_{k}y$ and $u(x)=l_{k}^{2}U(y)$. Then at leading order we
obtain 
\begin{equation*}
D_{0}U_{0,yy}=-1\quad\text{on}~(0,1),~\qquad U_{0}(0)=U_{0}(1)=0,
\end{equation*}
with $D_{0}=D(x_{k})$. This yields%
\begin{equation*}
U_{0}(y)=\frac{1}{2D_{0}}(y-y^{2}).
\end{equation*}
Since $\omega$ varies on the $O(1)$ scale while $l_{k}=O(1/N)$, it is
constant across an interval at leading order, so that in original variables
the contribution of $[x_{k},x_{k}+l_{k}]$ to $\overline{u}_{\omega}$ in (\ref%
{ubarw1D}) is 
\begin{equation*}
\int_{x_{k}}^{x_{k}+l_{k}}\omega
u\,dx\sim\omega(x_{k})\,l_{k}^{3}\int_{0}^{1}U_{0}(y)\,dy=\frac{%
\omega(x_{k})\,l_{k}^{3}}{12D(x_{k})}\,.
\end{equation*}
It follows that optimal trap locations correspond to the minimum of the
energy 
\begin{equation}
E=\sum\limits_{k=0}^{N}\omega(x_{k})\frac{l_{k}^{3}}{D(x_{k})}\,,\qquad
l_{k}=x_{k+1}-x_{k}.  \label{19:21}
\end{equation}
From Proposition \ref{prop:cont} with $f(\ell,x)=\omega\ell^{3}/D$ we obtain 
$\frac{\omega\ell^{3}}{D}-\frac{3\omega\ell^{3}}{D}=-\frac{2\omega\ell^{3}} {%
D}=\lambda,$ so that 
\begin{equation}
\ell(x)=C\left( \frac{D(x)}{\omega(x)}\right) ^{1/3},\qquad\text{with}~C=%
\frac{1}{N}\int_{0}^{L}\left( \frac{\omega(x)}{D(x)}\right) ^{1/3}dx
\end{equation}
being determined via (\ref{1855}). 
The effective trap density is given by $\rho(x)\sim1/\ell(x)$, so that 
\begin{equation}
\rho(x)\propto\left( \omega(x)M(x)\right) ^{1/3}=\left( \frac{\omega (x)}{%
D(x)}\right) ^{1/3}.  \label{rho1D}
\end{equation}
In particular, a uniform starting distribution $\omega\equiv1/L$ gives $%
\rho\propto M^{1/3}=D^{-1/3}$, while the stationary choice $\omega
=M/\left\vert M\right\vert $ gives $\rho\propto M^{2/3}$ -- in contrast to
2D and 3D, where that same choice yielded $\rho\propto M$ for every $\nu N$.

We remark that this result is independent of the choice of $\theta$ (It\^{o}
vs. Stratonovich).

\textbf{Example. }Suppose that 
\begin{equation}
D(x)=\exp(-ax),
\end{equation}
and take the starting distribution to be uniform, $\omega\equiv1/L$. Then 
\begin{equation}
\ell(x)=\frac{3}{Na}\left( \exp\left( La/3\right) -1\right) \exp(-ax/3).
\label{19:37}
\end{equation}

Figure \ref{fig:D} shows a comparison between a full numerical optimization
of the problem (\ref{19:21}) with $a=1$ and $L=4$, versus a full
optimization of (\ref{1733}), and the asymptotic result (\ref{19:37}). Note
that the distribution of the traps is denser on the right, where the
diffusion coefficient $D(x)$ is smaller. Good agreement is observed, which
improves for larger $N.$

\begin{figure}[ptb]
\includegraphics[width=0.48\textwidth]{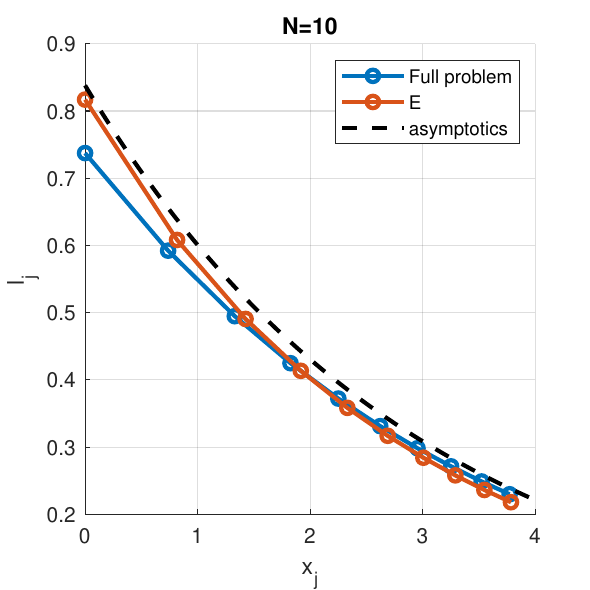} \includegraphics[width=0.48%
\textwidth]{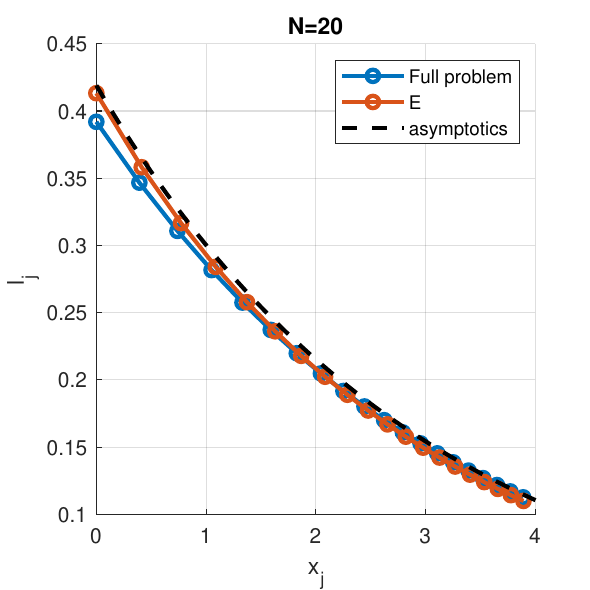}
\caption{ Optimal trap distribution which minimizes the global MFPT (uniform 
$\protect\omega$) for the problem (\protect\ref{1733}) with $D(x)=\exp(-x)$, 
$\protect\theta=0$, $L=4$, and with $N=10$ traps (left) and $N=20$ traps
(right). The graphs show a comparison between the numerical minimization of
the original problem (\protect\ref{1733}) (labeled ``full problem''),
numerical minimization of the simplified energy $E$ given by (\protect\ref%
{19:21}) (labeled ``E''), and the asymptotic formula (\protect\ref{19:37})
(labeled ``asymptotics''). }
\label{fig:D}
\end{figure}

\subsection{Heterogeneous drift or thin domain}

Consider a particle undergoing a Brownian motion in a one-dimensional medium
with space-dependent velocity $v(x)$. The associated Fokker--Planck equation
reads 
\begin{equation}
p_{t}+\left( vp\right) _{x}= D_{0} p_{xx}.  \label{fp}
\end{equation}
Our goal is to distribute $N$ traps at locations $x_{j}$ within a domain $%
[0,L]$ to minimize the mean capture time. Rescaling $D_{0}=1 $, the
associated MFPT\ problem becomes 
\begin{equation}
u_{xx}+v(x)u_{x}=-1 \quad\text{on}~ (x_{j},x_{j+1}), \qquad u(x_{j})=0 ,
\qquad j=0\ldots N.  \label{1883}
\end{equation}

Note that the same problem appears in the context of thin domains of
variable thickness, with traps that transversally span the domain. A
standard approximation via the integral over the transversal direction leads
to the problem (\ref{1883}) with $v(x)=\frac{\sigma^{\prime}(x)}{\sigma(x)}$%
, where $\sigma(x)$ is the cross-sectional area along the transversal
direction (see \cite{Grebenkov2020,Grebenkov2022} and references therein).

As in \S \,\ref{sec:D}, here we are interested in finding optimal trap
locations minimizing the $\omega$-averaged MFPT $\overline{u}_{\omega}$ from
(\ref{ubarw1D}). We consider two regimes: $v(x)=O(1)$ and $v(x)=O(N)$.

\textbf{Case A:}$\ v(x)=O(1)$. We proceed as in \S \,\ref{sec:D}. Expand
along an interval $[x_{k},x_{k}+l_{k}]$, with $l_{k}\ll1$. Rescaling $%
u(x)=l_{k}^{2}U(y)$, $x=x_{k}+yl_{k},$ we obtain, at leading order, that $%
U_{yy}\sim-1$. As such, the minimization problem is the same as (\ref{19:21}%
) with $D(x)=1$, so that (\ref{rho1D}) gives the leading-order density 
\begin{equation}
\rho(x)\propto\omega^{1/3}(x),\qquad\ell(x)\propto\omega^{-1/3}(x).
\label{rhoAlead}
\end{equation}
The drift therefore has no effect at the leading order. Note that this is
typically the case for a thin domain, where the geometry is independent of $%
N.$

\textbf{Case B:}\ $v(x)=O(N).$ Assume as before that $x_{k+1}-x_{k}=O(1/N).$
Let $x_{k+1}=x_{k}+l_{k}$, where $l_{k}=O(1/N)$. Along an interval $%
[x_{k},x_{k}+l_{k}]$, with $l_{k}\ll1$, we rescale $u(x)=l_{k}^{2}U(y)$, $%
x=x_{k}+yl_{k}$ to obtain the problem 
\begin{equation}
U_{yy}+v_{0}U_{y}\sim-1\quad\text{on}~(0,1),\qquad U(0)=0=U(1),
\end{equation}
where $v_{0}=l_{k} v(x_{k})=O(1)$. After some algebra we then obtain 
\begin{equation*}
F(v_{0}):=\int_{0}^{1}U(y)dy=-\frac{1}{v_{0}^{2}}+\frac{1}{2v_{0}}\coth
(v_{0}/2).
\end{equation*}
Consequently, weighting each interval by $\omega(x_{k})$ as in (\ref{19:21}%
), we seek to minimize the energy 
\begin{equation}
E=\sum\limits_{k=0}^{N}\omega(x_{k})\,l_{k}^{3}F(v(x_{k})l_{k}).  \label{EB}
\end{equation}
Equation (\ref{1854}) of Proposition \ref{prop:cont} is equivalent to 
\begin{equation*}
\omega\frac{z^{3}}{v^{3}}\left( 2F(z)+zF^{\prime}(z)\right) =\lambda,\qquad
z=\ell v.
\end{equation*}
After some algebra we find 
\begin{equation*}
2F(z)+zF^{\prime}(z)=\frac{1}{2z}\frac{\sinh z-z}{\cosh z-1} \,,
\end{equation*}
so that 
\begin{equation*}
\frac{\sinh z-z}{\cosh z-1}z^{2}=C\frac{v^{3}}{\omega},\qquad z=v\ell.
\end{equation*}
This yields a parametric solution%
\begin{equation}
\ell=C\left( \frac{z}{\omega}\,\frac{\cosh z-1}{\sinh z-z}\right)
^{1/3},\qquad v=z/\ell,  \label{lB-general}
\end{equation}
where the constant $C$ is chosen according to (\ref{1855}). 
As a check, $\frac{\cosh z-1}{\sinh z-z}\sim3/z$ as $z\rightarrow0$, so that 
$\ell\rightarrow C\left( 3/\omega\right) ^{1/3}$, recovering the $%
\rho\propto\omega^{1/3}$ of (\ref{rhoAlead}).

For example, take $\omega\equiv1/L$ and set $z=x$, $L=6$; the constant $%
\omega$ is absorbed into $C$, leaving 
\begin{equation}
\ell=\frac{C_{0}}{N}x^{1/3}\left( \frac{\cosh x-1}{\sinh x-x}\right)
^{1/3},\ \ \ v=\frac{N}{C_{0}}\left( \frac{\sinh x-x}{\cosh x-1}x^{2}\right)
^{1/3},  \label{lB}
\end{equation}
where 
\begin{equation}
C_{0}=\int_{0}^{6}x^{-1/3}\left( \frac{\cosh x-1}{\sinh x-x}\right)
^{-1/3}dx\approx3.7883.
\end{equation}

Figure \ref{fig:v} shows excellent agreement between the formula (\ref{lB})\
and full numerics, as $N$ is increased.

\begin{figure}[ptb]
\includegraphics[width=0.48\textwidth]{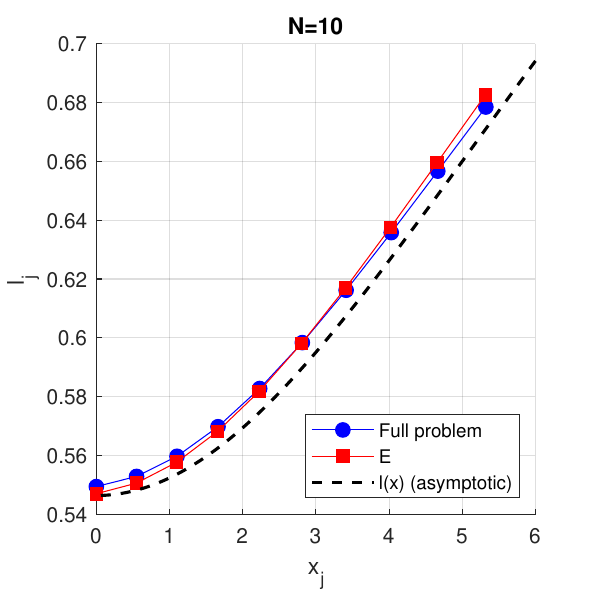} %
\includegraphics[width=0.48\textwidth]{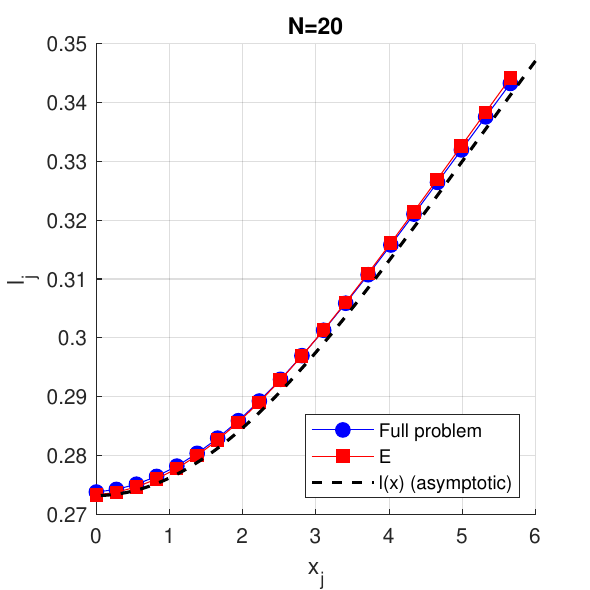}
\caption{ Optimal trap distribution which minimizes $\protect\int%
_{0}^{L}u(x)dx$ (uniform $\protect\omega$) for the problem (\protect\ref%
{1883}) with $v(x)$ as given in (\protect\ref{lB}), $L=6$, and with $N=10$
traps (left) and $N=20$ traps (right). The graphs show a comparison between
the numerical minimization of the original problem (\protect\ref{1883})
(labeled ``full problem''), numerical minimization of the simplified energy $%
E$ given by (\protect\ref{EB}) (labeled ``E''), and the asymptotic formula (%
\protect\ref{lB}) (labeled ``asymptotics''). }
\label{fig:v}
\end{figure}

\section{Discussion}

\label{sec:discussion}

In this paper we have obtained the asymptotics of the optimal trap density
that minimizes the MFPT $\overline{u}_{\omega}$ for a generic heterogeneous
diffusivity $D(x)$ and starting distribution $\omega$ of Brownian particles.

The key quantity is the trapping strength $\kappa=N\nu$ defined in (\ref{nu}%
); the optimal density is the arithmetic average $\rho=\frac{1}{2}%
\left(\mu+\omega\right) $ for weak trapping $\kappa\ll1$ and is the
geometric average $\rho\propto\sqrt{\mu\omega}$ for strong trapping $%
\kappa\gg1$. When $\omega=\mu$ (corresponding to the equilibrium particle
distribution in the absence of traps), we obtain $\rho=\mu$ for \emph{any} $%
\kappa$, with $u(x)$ being constant in this case. In 2D, the
\textquotedblleft classical\textquotedblright\ approach using Green's
functions works only in a very restricted regime where $1\ll
N\ll\log\varepsilon^{-1}$; the restriction is milder in 3D\ but still
significant ($N\ll\varepsilon^{-1} $). For a more realistic regime $%
N\gtrsim\log\varepsilon^{-1}$, we use a homogenization technique and derive
the optimality system (\ref{kappa-any}) that determines the optimal density;
the geometric average occurs precisely in the asymptotic limit $%
N\gg\log\varepsilon^{-1}.$ Numerical simulations of the full problem (\ref%
{u2d})\ are shown to agree with the homogenization theory.

The results also apply to homogeneous diffusion, but with a non-homogeneous
initial distribution. For example, suppose that the particles start
uniformly in the left half of the domain, and diffusion is homogeneous. In
the weak-trapping limit $\kappa\ll1$, the optimal trap distribution from (%
\ref{rhoavg}) corresponds to placing 75\% of the traps on the left, and 25\%
on the right, both uniformly. The strong-trapping limit $\kappa\gg1$ (\ref{rhogeom}), by contrast,  calls for placing all of the traps on the left
side of the domain and none on the right. Figure \ref{fig:kappa} (top)\ is
an illustration of that, but with the roles of $\omega$ and $\mu$ reversed
(note that all the density formulas are symmetric with respect to swapping $%
\mu\leftrightarrow\omega$). Interestingly, these results do not ``feel'' the
shape of the domain at leading order, at least in the strong- or
weak-trapping limits.

Let us contrast the optimal trap density with the uniform density. Define
the $\func{gain}=\overline{u}_{\omega }[\rho _{\text{uni}}]/\overline{u}%
_{\omega }[\rho _{\text{opt}}]$, by which an optimal trap arrangement is
better than the uniform one; here we use $\rho _{\mathrm{opt}}\ $to denote
the optimal density $\rho $ as found by solving the optimality system (\ref%
{kappa-any}). Table~\ref{tab:comparison} gives $\func{gain}$ as a function
of $\kappa $ for three examples: the equilibrium start $\omega =\mu $ with
the space-dependent diffusivity of Figures \ref{fig:kappa} and \ref%
{fig:sphere3d}, and a homogeneous medium on $\Omega =(0,1)$ with the
particles released uniformly in the left half.

\begin{table}[ptb]
\begin{center}
\begin{tabular}{lccccccccccc}
\hline
$\kappa$ & $0.1$ & $0.5$ & $1$ & $5$ & $10$ & $50$ & $100$ & $500$ & $1000$
& $5000$ & $\infty$ \\ \hline
$\func{gain}$ for $\omega=\mu$, rectangle of Fig. \ref{fig:kappa} & $1.022 $
& $1.098$ & $1.172$ & $1.449$ & $1.563$ & $1.715$ & $1.743$ & $1.770$ & $%
1.774$ & $1.777$ & $1.778$ \\ 
$\func{gain}$ for $\omega=\mu$, ball of Fig. \ref{fig:sphere3d}  & $1.003$ & $1.013$ & $1.026$ & $1.127$ & $1.241$ & $1.870$ & $2.301$
& $3.224$ & $3.468$ & $3.731$ & $3.814$ \\ \hline
$\func{gain}$ for $\Omega=(0,1)$, $\mu=1$, $\omega=2\cdot \mathbf{1}_{x<1/2}$
& $1.002$ & $1.010$ & $1.020$ & $1.082$ & $1.139$ & {$1.344$} & $1.449$ & $%
1.677$ & $1.755$ & {$1.879$} & $2$ \\ \hline
\end{tabular}%
\end{center}
\caption{Gain of optimal trap configuration vs. uniform, for different
settings.}
\label{tab:comparison}
\end{table}

When starting from the equilibrium particle density, $\omega =\mu ,$ we have 
$\overline{u}_{\omega }[\kappa ;\rho _{\text{opt}}]=\frac{\left\vert
M\right\vert }{\kappa }$ by (\ref{mainresult}), whereas $\overline{u}%
_{\omega }[\kappa ;\rho _{\text{uni}}]$ is obtained by solving (\ref{homog})
with $\rho \equiv 1/\left\vert \Omega \right\vert $ and evaluating $%
\int_{\Omega }\mu u.$ Moreover $\func{gain}\rightarrow 1$ as $\kappa
\rightarrow 0$, since then $u\sim \left\vert M\right\vert /\kappa $
irrespective of $\rho $; {\ in turn, as $\kappa \rightarrow \infty $, we
have $u\sim M/(\kappa \rho )$ and thus} 
\begin{equation}
\func{gain}(\infty )=\frac{\left\vert \Omega \right\vert \int_{\Omega }M^{2}%
}{\left( \int_{\Omega }M\right) ^{2}}\geq 1.  \label{dinf}
\end{equation}%
By Cauchy--Schwarz, the equality can only hold for constant $M$ and the gain
is highest for highly concentrated $M$. For the examples of Figures \ref%
{fig:kappa} and \ref{fig:sphere3d}, we find that the optimal placement is $%
71\%$ better than a uniform arrangement of the same traps for the rectangle (%
$\kappa =42$, $\func{gain}=1.71$) and $44\%$ better for the ball ($\kappa =20$, $\func{gain}=1.44$). The gain increases faster with $%
\kappa $ for the rectangle at first but ultimately is more pronounced for
the ball example for even larger $\kappa $.

The third row is the homogeneous example discussed above, with the particles
released in the left half only. Here $\rho_{\text{uni}}$ gives $u\equiv
1/\kappa$ exactly, and $\rho_{\text{opt}}$ is computed by direct
minimization of $\overline{u}_{\omega}$ over $\rho$, see Appendix \ref%
{app:numerics}. The last entry is the strong-trapping limit (\ref{rhogeom}),
which puts every trap in the left half and gives $\func{gain}(\infty)=1/(\int%
\sqrt {\omega\mu})^{2}=2$. This limit is approached very slowly, likely due
to the discontinuity of $\omega$ at $x=1/2$. Numerics for high $\kappa$
indicate a \textquotedblleft rampart\textquotedblright\ of high density near
the interface $x=1/2$, suggesting an optimal placement of traps akin to
\textquotedblleft border guards\textquotedblright\ on the boundary of the
area where the particles start. We leave further investigations into this
phenomenon as an open problem.

An interesting question is how a transition from the 1D scaling ($\rho
\propto\left( \omega\mu\right) ^{1/3}$) to a 2D scaling ($\rho \propto\left(
\omega\mu\right) ^{1/2}$ or $\rho\propto\omega+\mu$) occurs. Consider a thin
strip of width $O(h)$, with $h \ll1$. If we take traps that span the entire
transversal direction, then their distribution satisfies a 1D scaling $%
\rho\propto\left( \omega\mu\right) ^{1/3}$. On the other hand, if we take
traps that are disks whose radius $\varepsilon \ll h$, we obtain a 2D
scaling, as shown for example in Figure \ref{fig:rect2} (top row, $N=5$).
What if we take disks of radius commensurate with the channel width, $%
\varepsilon=O(h)$? One can expect a transition between different scaling
regimes as the disk radius is increased. This is an interesting avenue for
future exploration.

We showed that the 1D scaling $\rho \propto \left( \omega \mu \right) ^{1/3}$
is independent (to leading order) of the choice of the stochastic
interpretation (It\^{o} vs. Stratonovich), i.e., of the parameter $\theta $
in (\ref{lawley}). At the same time, our 2D analysis explicitly requires $%
\theta =0$. An interesting open problem is to derive the 2D scaling laws
when $\theta >0$. {\ In this case, the equilibrium particle distribution
without traps is $p_{\mathrm{eq}}\propto \lbrack D(x)]^{\theta -1}\propto
\mu ^{1-\theta }$, which is different from $\mu $ for $\theta >0$. In
particular, the initial particle distribution $\omega =\mu $ does not
necessarily lead to the optimal trap density.} Finally, it would be
interesting to extend the 1D results for particles with drift to a
two-dimensional setting.

In conclusion, heterogeneity does not simply attract traps toward regions of
low diffusivity. The optimal distribution reflects a competition between
where particles accumulate and where they are initially supplied. This
competition leads to arithmetic averaging in the weak-trapping regime and
geometric averaging in the strong-trapping regime. The equilibrium
preparation $\omega=\mu$ is exceptional: these two principles coincide,
yielding an optimal trap distribution identical to the particle distribution
and a spatially uniform MFPT for arbitrary trapping strength.

\subsection*{Acknowledgements}

We are grateful to Sean Lawley for interesting discussions and for
suggesting to look at a non-uniform $\omega $. {Claude AI\ (Opus version 5)
significantly assisted the elaboration of homogenization technique in \S \ref%
{sec:homog}, the optimality system for general $\kappa $ (\ref{kappa-any}),
as well as Theorems \ref{thm:minimax}--\ref{thm:general}. It was also used
for developing numerics for the full problem (\ref{u2d}) in 2D to validate
the \textquotedblleft hybrid\textquotedblright\ system and density
optimization (Fig. \ref{fig:kappa}), as well as for creating a web
application showcasing 2D results visually \cite{webapp}. 
The algorithms and the reported numerical results for the full problem in 2D
have been independently verified by the authors. Appendix A\ was written by
AI and edited by authors, whereas the remaining text} was written by
authors. 
The authors assume full responsibility for all content.


\appendix
\renewcommand\thesubsection{\thesection.\arabic{subsection}}

\section{Details of the numerical implementation}

\label{app:numerics}


\emph{The full problem in 2D.} Equation (\ref{u2d}) is discretized by finite
differences on a cell-centred Cartesian grid. The reflecting condition on $%
\partial \Omega $ is imposed through the mirror ghost value $u_{\text{ghost}%
}=u_{P}$, which on a cell-centred grid is exactly no-flux and second-order
accurate. The trap boundaries are resolved by the Shortley--Weller
construction \cite{shortley1938numerical}: on a grid line cut by $\partial
B_{\varepsilon }(x_{j})$ the regular spacing $h$ is replaced by the true
distance to the interface and the Dirichlet value moved to the right-hand
side, which avoids staircasing the traps; cut fractions are clamped from
below to keep the matrix well conditioned. The resulting non-symmetric
sparse matrix is factorized directly. The integral $\int_{\Omega }\omega u$
was computed by using the midpoint rule weighted by the fraction of each
cell's area that lies outside the traps; without this weighting the
quadrature loses an order of accuracy along every trap boundary.

\emph{Optimization over trap locations in 2D.} The objective $J(x_{1},\ldots
,x_{N})=\int_{\Omega }\omega u$ is minimized by gradient descent, each trial
step being followed by a projection of the centers back to at least $%
1.2\varepsilon $ from every wall and $2.6\varepsilon $ from one another,
center to center. The gradient is obtained from the adjoint. Introduce $v$,
the solution of the same boundary value problem as $u$ but with the source
density $\omega $ in place of $M$: 
\begin{equation}
\Delta v=-\omega \ \ \text{in }\Omega \backslash \left( \cup
_{j}B_{\varepsilon }(x_{j})\right) ,\qquad v=0\ \ \text{on }\cup
_{j}\partial B_{\varepsilon }(x_{j}),\qquad \partial _{n}v=0\ \ \text{on }%
\partial \Omega .  \label{adjoint2d}
\end{equation}%
This is the discrete counterpart of the adjoint $v$ of the optimality system
(\ref{kappa-any}): there $v$ carries the sensitivity of $\overline{u}%
_{\omega }$ to the homogenized density $\rho $, here to the position of an
individual trap. Moving $\partial B_{\varepsilon }(x_{k})$ with normal
velocity $V_{n}$ changes $J$ by the Hadamard shape derivative for a
Dirichlet boundary \cite{sokolowski1992introduction}, 
\begin{equation}
\delta J=-\int_{\partial B_{\varepsilon }(x_{k})}\partial _{n}u\,\partial
_{n}v\,V_{n}\,ds,  \label{hadamard}
\end{equation}%
where $n$ is the outward normal of the domain, i.e.\ pointing into the trap.
Translating the trap rigidly by a unit vector $e$ means $V_{n}=e\cdot n$,
and on the circle $ds=\varepsilon \,d\vartheta $, so that taking $e$ along
each coordinate direction in turn gives the two components of 
\begin{equation}
\frac{\partial J}{\partial x_{k}}=-\varepsilon \int_{0}^{2\pi }\partial
_{r}u\,\partial _{r}v\,(\cos \vartheta ,\sin \vartheta )\,d\vartheta ,
\label{shapederiv2d}
\end{equation}%
the integral being over $\partial B_{\varepsilon }(x_{k})$. Equation (\ref%
{shapederiv2d}) is the exact, finite-$\varepsilon $ counterpart of the
gradient of the Green's energy (\ref{greenE}): expanding it as $\varepsilon
\rightarrow 0$ reproduces $\partial E/\partial x_{k}$, and hence the
velocity field of the flow (\ref{ode}).

The step length is chosen by the two-point rule of Barzilai and Borwein \cite%
{barzilai1988two}. Writing $s^{(i)}=x^{(i)}-x^{(i-1)}$ and $%
y^{(i)}=g^{(i)}-g^{(i-1)}$ for the changes in the configuration and in its
gradient between successive iterates, 
\begin{equation}
t^{(i)}=\frac{s^{(i)}\cdot s^{(i)}}{s^{(i)}\cdot y^{(i)}},\qquad
x^{(i+1)}=x^{(i)}-t^{(i)}g^{(i)},  \label{bb}
\end{equation}%
whenever $s^{(i)}\cdot y^{(i)}>0$, and $t^{(i)}=2t^{(i-1)}$ otherwise; the
first step is scaled so that the largest trap moves a prescribed small
distance. The quotient in (\ref{bb}) is a Rayleigh quotient of the Hessian
along the last step, so it carries curvature information without any second
derivative ever being formed. The raw Barzilai-Borwein iteration is not
monotone, so each step is accepted only if it satisfies an Armijo condition 
\cite{armijo1966minimization}, $J(x-tg)\leq J(x)-c_{1}t\left\Vert
g\right\Vert ^{2}$ with $c_{1}=10^{-4}$, and the step is halved until it
does; this costs one extra objective evaluation per backtrack, and makes the
objective history decrease monotonically. A nonmonotone acceptance window
was tried and was slower on these problems.

\emph{The ball.} All three-dimensional results are computed from the hybrid
system (\ref{140}): for a given configuration, $T$ and $C_{1},\ldots ,C_{N}$
follow from a single $(N+1)\times (N+1)$ solve, which delivers $\overline{u}%
_{\omega }$ through (\ref{fullubar}). The trap centers are then optimized by
the same Barzilai--Borwein descent (\ref{bb}) with Armijo backtracking, the
gradient being computed analytically from the adjoint of that linear system.
No constraint keeping the traps apart or inside the ball is needed here: the
objective is its own barrier, since $R(x,x)$ diverges at the boundary and $%
G(x_{j},x_{k})$ as two traps merge. We minimize the Green's energy (\ref%
{greenE}) first, from random initial conditions, and use its minimizer as
the starting point for the hybrid objective; the descent is stopped when $%
\max_{k}\left\vert \partial \overline{u}_{\omega }/\partial x_{k}\right\vert 
$ falls below $10^{-8}$. Effective densities are estimated from Voronoi cell
volumes clipped to the unit ball. As an independent check, (\ref{u2d}) was
also solved directly in the ball by a cell-centred finite-volume scheme on a
uniform Cartesian grid; resolving the traps needs $n=8/\varepsilon $ cells
per side, so this is affordable only at a relatively large $\varepsilon$:
around 8 million cells were needed for $\varepsilon=0.04$ which we used for
validation; with $N=20$ the hybrid and direct solve agreed to within $1\%$.

\emph{The optimal density at finite} $\kappa $. The solid curves of Figure %
\ref{fig:sphere3d} and the third row of Table \ref{tab:comparison} are
minimizers of $\overline{u}_{\omega }=\int_{\Omega }\omega u$ over densities 
$\rho \geq 0$ with $\int_{\Omega }\rho =1$, where $u$ solves the homogenized
problem (\ref{homog}). In all considered cases, $M$ and $\omega $ depend on
one coordinate only, so (\ref{homog}) is a two-point boundary value problem,
either in the radial coordinate $r$ for the ball, or in $x$ for the
rectangle; we discretize it by cell-centred finite volumes ($n=400$ cells,
zero flux at both ends), so that one evaluation of $\overline{u}_{\omega }$
is one tridiagonal solve.

The gradient is available from the adjoint: with $v$ solving (\ref{homog})
with $\omega $ in place of $M$, a variation $\delta \rho $ changes the
objective by $\delta \overline{u}_{\omega }=-\kappa \int_{\Omega }uv\,\delta
\rho $, so the descent direction is $\kappa uv$, at the cost of a second
tridiagonal solve. This is the same quantity that appears in the optimality
system (\ref{kappa-any}), whose condition $uv=\mathrm{const}$ on the support
of $\rho $ is exactly stationarity of $\overline{u}_{\omega }$ under the
mass constraint. Positivity and normalization are built in by optimizing
over $\theta (x)$ with $\rho (x)=e^{\theta (x)}/\int_{\Omega }e^{\theta }$,
which turns the constrained problem into an unconstrained one; the chain
rule gives $\partial \overline{u}_{\omega }/\partial \theta _{i}=\rho _{i}%
\bigl(G_{i}-\textstyle\sum_{j}w_{j}\rho _{j}G_{j}\bigr)$ with $G_{i}=-\kappa
w_{i}u_{i}v_{i}$ and $w_{i}$ the cell volumes, i.e.\ the gradient with its $%
\rho $-weighted mean removed.

The descent itself is the Barzilai--Borwein iteration described above, (\ref%
{bb}) with Armijo backtracking, applied to $\theta$: no Hessian is formed,
and the two-point step carries the curvature information that a plain
gradient method would lack. We continue in $\kappa$, starting from the
closed-form small-$\kappa$ optimum (\ref{rhoavg}) and using the minimizer at
each $\kappa$ as the starting point for the next; the iteration is stopped
when the gradient norm falls below $10^{-6}$ relative to the objective,
which takes between a few and two hundred steps per $\kappa$, and an entire $%
\kappa$ column of Table \ref{tab:comparison} takes well under a minute. For $%
\omega=\mu$ the method returns $\rho=\mu$ and $\overline{u}%
_{\omega}=\left\vert M\right\vert /\kappa$ to the tolerance used, as (\ref%
{mainresult}) requires, and doubling $n$ changes the reported values of the
gain in the fourth digit at most.

\bibliographystyle{elsarticle-num}
\bibliography{bib}

\end{document}